\providecommand{\ceil}[1]{\left \lceil #1 \right \rceil }
\providecommand{\floor}[1]{\left \lfloor #1 \right \rfloor }
\newtheorem{theorem}{Theorem}
\newtheorem{conjecture}{Conjecture}
\newtheorem{corollary}{Corollary}
\newtheorem{definition}{Definition}
\newtheorem{observation}{Observation}
\newtheorem{lemma}{Lemma}
\newcommand{\algorithmicbreak}{\textbf{break}}
\newcommand{\BREAK}{\STATE \algorithmicbreak}
\newcommand{\csd}{\textsc{CSD}\xspace}
\newcommand{\trpartition}{\textsc{3-Partition}\xspace}
\newcommand{\NP}{\ensuremath{\mathtt{NP}}\xspace}
\newcommand{\coverage}{\ensuremath{\text{coverage}}}
\newcommand{\supp}[1]{\text{supp}\ensuremath{(#1)}}
\newcommand{\DR}[2]{\text{DR}\ensuremath{(#1,#2)}}
\newcommand{\PoD}{\ensuremath{\mathrm{PoD}}}
\definecolor{darkgreen}{rgb}{0,0.6,0}
\newcommand{\kibitz}[2]{\ifnum\Comments=1{\color{#1}{#2}}\fi}
\title{\vspace{-0.5cm}Connected Subgraph Defense Games\thanks{Supported in part by the NeST initiative of the School of EEECS of the University of Liverpool.}\vspace{0.5cm}}
\author{Eleni C.~Akrida\thanks{Department of Computer Science, University of Liverpool, Liverpool, UK.}
	\and Argyrios Deligkas\thanks{Department of Computer Science, University of Liverpool, UK and 
		Leverhulme Research Centre for Functional Materials Design, Liverpool, UK.}
	\and Themistoklis Melissourgos\thanks{Department of Computer Science, University of Liverpool, Liverpool, UK.} 
	\and Paul G.~Spirakis\thanks{Department of Computer Science, University of Liverpool, UK and 
		Computer Engineering \& Informatics Department, University of Patras, Greece. } \\
	\texttt{\small \{E.Akrida,Argyrios.Deligkas,T.Melissourgos,P.Spirakis\}@liverpool.ac.uk}}
\date{}
\begin{document}
	
\maketitle              

\begin{abstract}
We study a security game over a network played between a {\em defender} and $k$ {\em
attackers}. 
Every attacker chooses, probabilistically, a node of the network to damage. 
The defender chooses, probabilistically as well, a connected induced subgraph of the network 
of $\lambda$ nodes to scan and clean.
Each attacker wishes to maximize the probability of escaping her cleaning by the defender.
On the other hand, the goal of the defender is to maximize the expected number of attackers 
that she catches. 
This game is a generalization of the model from the seminal paper of Mavronicolas et al.
\cite{MavronicolasMPPS06}. 
We are interested in Nash equilibria of this game, as well as in characterizing \emph{defense-optimal} networks which allow for the best \emph{equilibrium defense ratio}, termed {\em Price of Defense}; this is the ratio of $k$ over the expected number of attackers that the defender catches in equilibrium. 
We provide characterizations of the Nash equilibria of this game and defense-optimal networks.
This allows us to show that the equilibria of the game coincide independently from  the 
coordination or not of the attackers.
In addition, we give an  algorithm for computing Nash equilibria.  Our algorithm requires 
exponential time in the worst case, but it is polynomial-time for $\lambda$ constantly close to 1 or $n$. 
For the special case of tree-networks, we further refine our characterization which allows us to 
derive a polynomial-time algorithm for deciding whether a tree is defense-optimal and if this is
the case it computes a defense-optimal Nash equilibrium. On the other hand, we prove that it
is \NP-hard to find a best-defense strategy if the tree is not defense-optimal. 
We complement this negative result with a polynomial-time constant-approximation
algorithm that computes solutions that are close to optimal ones for general graphs. 
Finally, we provide asymptotically (almost) tight bounds for the Price of Defense for any $\lambda$.



~\\
\noindent \textbf{Keywords:} Defense games, defense ratio, defense-optimal.

\end{abstract}

\section{Introduction}\label{sec:intro}

With technology becoming a ubiquitous and integral part of our lives, we find ourselves using several different types of ``computer'' networks. 
An important issue when dealing with such networks, which are often prone to security breaches~\cite{Cheswick03}, is to prevent and monitor unauthorized access and misuse of the network or its accessible resources. Therefore, the study of network security has attracted a lot of attention over the years~\cite{Stallings03}. Unfortunately, such breaches are often inevitable, since some parts of a large system are expected to have weaknesses that expose them to security attacks; history has indeed shown several
successful and highly-publicized such incidents~\cite{Spafford1989}.
Therefore, the challenge for someone trying to keep those systems and networks of computers secure is to counteract these attacks
as efficiently as possible, once they occur.

To that end, inventing and studying appropriate theoretical models that capture the essence of the problem is an important line of research,
ongoing for a few years now~\cite{MavronicolasPPS05, MavronicolasPPPS06}. In this work, extending some known models for very simple cases attacks and defenses~\cite{MavronicolasMPPS06, MavronicolasPPS08}, we introduce 
and analyze a more general model for a scenario of network attacks and defenses modeling it as a \emph{defense game}.

\paragraph{The Network Security Game.}

We follow the terminology established by the seminal paper of Mavronicolas et al.~\cite{MavronicolasPPS08}.
We consider a network whose nodes are vulnerable to infection by threats called \emph{attackers};
think of those as viruses, worms, Trojan horses or eavesdroppers~\cite{FranklinGY00} infecting the components of a computer network.
Available to the network is a security software (or firewall), called the \emph{defender}. 
The defender is only able to ``clean'' a limited
part of the network from threats that occur; the reason for the limited cleaning capacity of the defender
may be, for example, the cost of purchasing a global security software.
The defender seeks to protect the network as much as possible, and on the other hand,
every attacker seeks to increase the likelihood of not being caught.
Both the attackers and the defender make individual decisions for their positioning
in the network with the aim to maximize their own objectives. 

Every attacker targets (and attacks) a node chosen via her own probability distribution over the nodes of the network.
The defender cleans a connected induced subgraph of the network with size $\lambda$,
chosen via her own probability distribution over all connected induced subgraphs of the graph
with $\lambda$ nodes.
The attack of a particular attacker is successful unless the node chosen by the attacker is incident to an edge (link) being
cleaned by the defender, i.e.~to an edge belonging in the induced subgraph chosen by the defender.
One could equivalently think of the defender selecting a set of $\lambda$ connected nodes to defend, and an attacker is successful if and
only if she attacks a node that is not being defended.
Since attacks and defenses over a large computer network are self-interested procedures that
seek to maximize damage and protection, respectively, it is natural to model this network
security scenario as a non-cooperative \emph{strategic game} on graphs with two kinds
of players: $k \geq 1$ \emph{attackers}, each playing a \emph{vertex} of the graph, and a single \emph{defender} playing a \emph{connected induced subgraph} of the graph. The \emph{(expected) payoff} of an attacker is the probability that
she is not caught by the defender; the \emph{(expected) payoff} of the defender is
the (expected) number of attackers she catches.
We are interested in the Nash equilibria~\cite{Nash1951,Nash48} associated with this graph theoretic
game, where no player can unilaterally improve her (expected) payoff by switching to another probability distribution. We are also interested in understanding and characterizing the networks
that allow for a good \emph{defense ratio}: given a strategy profile, i.e.~a combination of strategies for the network entities (attackers and defender),
the defense ratio of a network is the ratio of the total number of attackers over the defender's expected payoff in that strategy profile.

\subsection{Our results}\label{sec:results}

In this paper we depart from and significantly extend the line of work of Mavronicolas et al. 
in their seminal paper~\cite{MavronicolasPPS08} on defense games in graphs; we term the type of games we consider
{\em CSD games}.
In our model the defender is more powerful than in~\cite{MavronicolasPPS08}, since her power
is parameterized by the size, $\lambda$, of the defended part of the network. We allow 
$\lambda$ to take values from 1 to $n$, while in~\cite{MavronicolasPPS08} only the case 
where $\lambda=2$ was studied.
We study many questions related to CSD games.
%
We extend the notions of \emph{defense ratio} and \emph{defense-optimal graphs} for 
CSD games.
In fact, the defense ratio of a given graph $G$ and a given strategy profile $S$ of the attackers and the defender is the ratio of the number of attackers, $k$, 
over the defender's expected payoff (the number of attackers she catches on expectation).
We thoroughly investigate the notion of the defense ratio for Nash equilibria strategy profiles.

Firstly, we precisely characterize the Nash equilibria  and defense-optimal graphs in CSD
games. 
This allows us to show that, in equilibrium, the game version of $k$ uncoordinated attackers
and a single defender is equivalent to the version in which a single leader coordinates the $k$
attackers, meaning that both versions of the game have the same defense ratio. We present an LP-based algorithm to compute an exact equilibrium of any given CSD game, whose running time is polynomial in $\binom{n}{\lambda}$.
Then, we focus on tree-graphs. There, we further refine our equilirbium characterization
which allows us to derive a polynomial-time algorithm for deciding whether a tree is 
defense-optimal and, if this is the case, it computes a defense-optimal Nash equilibrium. A tree
is defense-optimal if and only if it can be partitioned into $\frac{n}{\lambda}$  disjoint 
sub-trees.
On the other hand, we prove that it is \NP-hard to find a best-defense strategy if
the tree is not defense-optimal. 
%
We remark that a very crucial parameter for defense-optimality of a graph $G$ is the ``best'' probability with which any vertex of $G$ is defended in a NE; we call that probability \emph{MaxMin probability} and denote it by $p^*(G)$.
Then, for any graph $G$, the defense ratio in equilibrium is shown to be exactly $\frac{1}{p^*(G)}$.
Although it is hard to exactly compute $p^*(G)$ even for trees, we complement this negative result with a polynomial-time constant-approximation
algorithm that computes solutions that are close to the optimal ones for any $\lambda$, for any given general graph. 
In particular, we approximate the (best) defense ratio of any graph within a factor of 
$2+\frac{\lambda - 3}{n}$. Finally, we provide asymptotically tight bounds for the Price of Defense for any $\lambda \in \omega(1) \cap o(n)$, and almost tight bounds for any other value of $\lambda$.

\subsection{Related work}\label{sec:rel_work}

Our graph-theoretic game is a direct generalization of the defense game considered by Mavronicolas et al.~\cite{MavronicolasMPPS06,MavronicolasPPS08}. 
In the latter, the authors examined the case where the size of the defended part of the network is $\lambda = 2$, i.e.~where the defender ``cleans'' an edge.
This lead to a nice connection between equilibria and (fractional) matchings in the graph~\cite{MavronicolasPPPS06}.
But when $\lambda$ is greater than 2, one has to investigate (as we shall see here) how to sparsely cover the graph by as small
a number as possible of connected induced subgraphs of size $\lambda$.
This direction can be seen as an extension of fractional matchings to covers of the graph by equisized connected subgraphs.	
Sparse covering of graphs by connected induced subgraphs (clusters), not necessarily equisized, is a notion known to be useful also for distributed algorithms,
since it affects message communication complexity~\cite{Attiya2004}.


In another line of work, Kearns and Ortiz~\cite{KearnsO03} study \emph{Interdependent Security games} in which
a large number of players must make individual decisions regarding security. 
Each player's \emph{safety} may depend on the actions of the
entire population (in a complex way). The graph-theoretic game that we consider could be seen as a particular instance of
such games with some sort of limited interdependence: the actions of the defender and an attacker are interdependent, while the
actions of the attackers are not dependent on each other.

Aspnes et al.~\cite{AspnesCY06} consider a graph-theoretic game that models containment of the spread of viruses on a network;
each node individually must choose to either install anti-virus software at some cost, 
or risk infection if a virus reaches it without being stopped by some intermediate node
with installed anti-virus software. 
Aspnes et al.~\cite{AspnesCY06} prove several algorithmic properties for their
graph-theoretic game and establish connections to a certain graph-theoretic problem
called \emph{Sum-of-Squares Partition}.

A game on a weighted graph with two players, the \emph{tree player} and the \emph{edge player}, was studied by Alon et al.~\cite{AlonKPW95}.
At each play, the tree player chooses a spanning tree and the edge player chooses an edge of the graph, and the payoffs of the players
depend on whether the chosen edge belongs in the spanning tree. Alon et al. investigate the theoretical aspects of the above game and
its connections to the \emph{$k$-server problem} and \emph{network design}.
%

Finally, there is a long line of work on security games~\cite{an2011guards} where many 
scenarios are modelled using graph theoretic problems~\cite{jain2013,letchford2013,vanvek2012,Xu2016}.

\section{Preliminaries}\label{sec:prel}

\paragraph{The game.}
A {\em Connected-Subgraph Defense (CSD) game} is defined by a graph $G=(V,E)$, a 
{\em defender}, $k \geq 1$ {\em attackers}, and a positive integer $\lambda$. 
Throughout the paper, $\lambda$ is considered to be a \emph{given} parameter of the game.
A pure strategy 
for the defender is any induced connected subgraph $H$ of $G$ with $\lambda$ vertices, 
which we term {\em $\lambda$-subgraph}. 
For any $\lambda$-subgraph $H$ of $G$ we denote $V(H)$ its set of vertices.
Since $V(H)$ uniquely defines an induced subgraph of $G$, we will use the term 
$\lambda$-subgraph to denote either $V(H)$ or $H$. 
The {\em action set} of the defender is $D := \{ V(H) |  H \text{ is a } \lambda\text{-subgraph of } G \}$ and we will denote its cardinality by $\theta$, i.e. $\theta := |D|$. 
For ease of presentation, we will also refer to $D$ as $[\theta] := \{ 1, 2, \dots, \theta \}$.
A pure strategy for each of the attackers is any vertex of $G$.
So, the action set of every attacker is $V$, the vertex set of $G$; we denote $n := |V|$ and we similarly refer to $V$ also as $[n]$.

To play the game, the defender chooses a {\em defense (mixed) strategy}, i.e. a probability
distribution over her action set, and each attacker 
chooses an {\em attack (mixed) strategy}, i.e. a probability distribution over the vertices of 
$G$. 
We denote a strategy by $s := (s_1, \dots, s_{d}) \in \Delta_{d}$, i.e. by the probability distribution over $d$ enumerated pure strategies, where $\Delta_{d} := \{ x_1, \dots, x_d \geq 0 | \sum_{i=1}^{d} x_{i} = 1 \}$ is the $(d-1)$-unit simplex. In a defense strategy $q \in \Delta_{\theta}$ each pure strategy $j \in [\theta]$ is assigned a probability $q_j$. 

We say that a pure strategy of the defender, i.e. a specific $\lambda$-subgraph $H$ of $G$, \emph{covers} a vertex $v\in V$ if $v \in V(H)$.
A defense strategy covers a vertex $v\in V$ if it assigns strictly positive probability to at 
least one $\lambda$-subgraph $H$ of $G$ which contains $v$.

\begin{definition}[Vertex Probability]
	The {\em vertex probability} $p_{i}$ of vertex $i \in [n]$, is the probability that $i$ will 
	be covered, formally~	$p_{i} := \sum_{j \in [\theta]:~i \in j} q_j$.
\end{definition}
The {\em support} of a strategy $s$, denoted by \supp{s}, is the subset of the action set that is assigned strictly positive probability.

\vspace*{-1em}
\paragraph{Payoffs and Strategy profiles.}
A {\em strategy profile} is a tuple of strategies $S=(q, t_1, \dots, t_k)$, where $q$ denotes 
the defender's strategy and $t_j$ denotes the $j$-th attacker's strategy,~$j \in [k]$. A 
strategy profile is pure if the support of every strategy has size one.
The {\em payoff} of every attacker is 1 in any pure strategy profile where she does not choose a defended vertex, and 0 in all the rest. 
The payoff of the defender in a pure strategy profile where she defends $V(H)$, is the number of attackers that choose a vertex in $V(H)$. 
%
Under a strategy profile, the {\em expected payoff} of the defender is the expected number of attackers that she catches, which we call {\em defense value}, and the expected payoff of the attacker is the probability that she will not get caught.
A {\em best response} strategy for a participant is a strategy that maximizes her expected
payoff, given that the strategies of the rest of the participants are fixed.
A {\em Nash equilibrium}  is a strategy profile where all the participants are 
playing a best response strategy. In other words,  neither the defender nor any of the 
attackers can increase their expected payoff by unilaterally changing their strategy.



\begin{definition}[Defense Ratio]
	For a given graph $G$ we define a measure of the quality of a strategy profile $S$, called {\em defense ratio of $G$} and denoted \DR{G}{S}, as the ratio of the total number of attackers $k$ over the defense value. 
\end{definition}

In this work we are only interested in the cases where $S$ is an equilibrium.
 For a given graph, when in equilibrium, the defender's expected payoff is unique (due to Theorem \ref{lem:char_DR} and Corollary \ref{cor:from_NE_char} \ref{cor:unique_NE}) and achieves the {\em equilibrium defense ratio} \DR{G}{S^*}, where $S^*$ is an equilibrium.  The defense strategy in $S^*$ which achieves this  defense ratio will be termed {\em best-defense strategy}. 

\begin{definition}[MaxMin Probability, $p^*$]
	We call \emph{MaxMin Probability} of a graph $G$ the maximum, over all defense strategies, minimum vertex probability in $G$, that is:
	\[p^*(G) := \max_{q \in \Delta_{\theta}} \min_{i \in [n]} p_i.\]
\end{definition}

As we will show in Lemma \ref{lem:char_DR}, the equilibrium defense ratio of a graph $G$ turns out to be $\DR{G}{S^*} = 1/p^*(G)$.

\begin{definition}[Price of Defense]
	 The {\em Price of Defense}, \PoD, for a given parameter $\lambda$ of the game, is the worst defense ratio, over all graphs, achievable in equilibrium, that is:
	 \[\mathrm{PoD} (\lambda) = \max_{G} \mathrm{DR}(G,S^*) . \]
\end{definition}

 
\begin{definition}[Defense-Optimal Graph]
  For a given $\lambda$, a graph $G^*$ that achieves the minimum equilibrium defense ratio over all graphs, i.e.~$G^* \in \arg \min_{G} \mathrm{DR}(G,S^*)$, is called {\em defense-optimal graph}.
\end{definition} 

In the following, for ease of presentation, whenever we refer to defense optimality, we implicitly assume that $\lambda$ has a fixed value.

\section{Nash equilibria}

In this section, we provide a characterization of Nash equilibria in CSD games, as well as important properties of their structure which prove useful for the development of our algorithms in the remainder of the paper.

\begin{theorem}[Equilibrium characterization]\label{thm:char_NE}
	For a given graph $G$, in any equilibrium with support $S \subseteq [\theta]$ of the defender and support $T_j \subseteq [n]$ of each attacker $j \in [k]$, the following conditions are necessary and sufficient:
	\begin{enumerate}
		\item $\min_{i \in [n]} p_i$ is maximized over all defense strategies, and
		\item $\bigcup_{j \in [k]} T_j \subseteq V^*$, where $V^* := \arg \max_{q \in \Delta_{\theta}}min_{i \in [n]} p_i$,  and
		\item every $s \in S$ has the maximum expected total number of attackers on its vertices over all pure strategies.
	\end{enumerate}
\end{theorem}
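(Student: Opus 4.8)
The plan is to first re-express both players' payoffs in terms of the vertex probabilities $p_i$ and the \emph{aggregated} attack profile $\tau_i := \sum_{j\in[k]} t_{j,i}$, the expected number of attackers placed on vertex $i$ (so $\sum_{i\in[n]}\tau_i = k$). Since an attacker is caught exactly when the defender covers her vertex, the defender's expected payoff (the defense value) equals $\sum_{i\in[n]} \tau_i\, p_i$, while the payoff of attacker $j$ equals $1 - \sum_{i\in[n]} t_{j,i}\, p_i$. Crucially, the defender's payoff depends only on $\tau$ and is linear in $q$ (through the $p_i$), and each attacker's payoff is linear in her own distribution once $q$ is fixed; this is what lets me reason about best responses subgraph-by-subgraph and vertex-by-vertex.

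For \textbf{necessity} I would read conditions 2 and 3 straight off the best-response requirements. Attacker $j$ maximizes $\sum_i t_{j,i}(1-p_i)$, so in equilibrium she puts positive probability only on vertices minimizing $p_i$; hence $T_j \subseteq \arg\min_{i} p_i$ for every $j$, and once condition 1 is in place this set is exactly $V^*$, giving condition 2. Symmetrically, the defender maximizes the linear function $\sum_{H} q_H\big(\sum_{i\in V(H)} \tau_i\big)$ over $\Delta_\theta$, so every $\lambda$-subgraph in her support must attain the maximum of $\sum_{i\in V(H)}\tau_i$, i.e.\ carry the maximum expected number of attackers; this is condition 3.

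The crux, and the step I expect to be the main obstacle, is the necessity of condition 1. I would first observe that in any equilibrium the defense value collapses to a clean form: by the attacker best response just derived, all attack mass sits on vertices with $p_i = \mu$, where $\mu := \min_{i} p_i$ under the equilibrium defense strategy, so $\sum_i \tau_i p_i = \mu\sum_i \tau_i = k\mu$. Suppose for contradiction $\mu < p^*(G)$. Fixing the attackers' strategies and letting the defender deviate to any MaxMin strategy $\hat q$ — which by definition of $p^*(G)$ raises \emph{every} vertex probability to $\hat p_i \geq p^*(G)$ simultaneously, not merely the minimum one — her payoff becomes $\sum_i \tau_i \hat p_i \geq p^*(G)\sum_i \tau_i = k\,p^*(G) > k\mu$, contradicting that she was best-responding. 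Hence $\mu = p^*(G)$, so her strategy maximizes $\min_i p_i$ (condition 1), and $\arg\min_i p_i = \{i : p_i = p^*(G)\}$ is exactly $V^*$.

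For \textbf{sufficiency} I would reverse each implication. Given a profile meeting all three conditions, condition 1 forces $\min_i p_i = p^*(G)$, so every attacker's pure payoff satisfies $1-p_i \le 1-p^*(G)$ with equality precisely on $V^*$; by condition 2 each attacker plays only within $V^*$ and therefore attains the maximum $1-p^*(G)$, i.e.\ best-responds. Condition 3 says the defender's support consists solely of pure best responses to the fixed aggregate $\tau$, and a mixed strategy supported on pure best responses is itself a best response, so the defender best-responds as well. As no player can improve unilaterally, the profile is a Nash equilibrium. The only genuinely delicate point throughout is the deviation argument above, where one must exploit that a MaxMin strategy lifts all vertex probabilities to at least $p^*(G)$ at once.
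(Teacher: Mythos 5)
Your proof is correct and follows essentially the same route as the paper's: payoffs are linearized through the vertex probabilities and the aggregate attack mass, Conditions 2 and 3 are read off the standard support-of-a-best-response characterization, and Condition 1 comes from a deviation to a MaxMin strategy. If anything, your explicit observation that a MaxMin strategy lifts \emph{every} vertex probability to at least $p^*(G)$ --- so the deviation payoff is at least $k\,p^*(G)$ regardless of where the attackers sit --- spells out the key step that the paper's own argument for Condition 1 leaves implicit when it jumps from ``the equilibrium payoff is $p'\cdot k$'' to ``the defender attains $\max_{q} p'\cdot k$''.
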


\begin{proof}
First we will prove that the conditions in the statement of the lemma hold in equilibrium, i.e.~ equilibrium is sufficient for the conditions to hold. 

\textbf{Condition 1.} By definition, in an equilibrium the defender and each attacker have chosen a best response. Suppose that the defender has chosen some strategy $q = (q_1, q_2, \dots, q_\theta)$ over her action set $[\theta]$, and we will consider this strategy to be a vector variable for now. Given $q$, each vertex $i \in [n]$ has a vertex probability $p_i$. Now consider the minimum vertex probability $p' := \min_{i \in [n]} p_i$, and the set $V' \subseteq V$ consisting of the vertices with vertex probability $p'$, i.e. $V' := \arg \min_{i \in [n]} p_i$. Since an attacker plays a best response, her support will be a subset of $V'$; otherwise, if she assigns probability $t_v > 0$ on a vertex $v \notin V'$ (with $p_v > p'$) her expected payoff (see quantity \eqref{eq:attacker_exp_payoff}) can be strictly increased by choosing to move all of $t_v$ to another vertex $u \in V$, thus increasing her expected payoff by $t_u(p_v - p')$. Therefore, every attacker's support will be a subset of $V'$.

Now suppose that there are $k \geq 1$ attackers and let us denote the set of attackers by $[k]$. We will denote by $t_{ji}$ the probability that the strategy of attacker $j \in [k]$ has assigned on vertex $i \in [n]$. The expected payoff of the defender is:

\begin{align}\label{eq:defender_exp_payoff}
\sum_{\substack{i \in [n]}} \left( p_{i} \sum_{j \in [k]} t_{ji} \right).
\end{align} 
	
Since as we argued above, in an equilibrium, each attacker's strategy has support that is subset of $V'$, the expected payoff of the defender will be 
	
\begin{align*}
\sum_{\substack{i \in V'}} \left( p_i \sum_{j \in [k]} t_{ji} \right) + \sum_{\substack{i \in V \setminus V'}} \left( p_{i} \sum_{j \in [k]} t_{ji} \right) = p' \cdot \sum_{\substack{i \in V'}} \left( \sum_{j \in [k]} t_{ji} \right) = p' \cdot \sum_{j \in [k]} \left( \sum_{\substack{i \in V'}} t_{ji} \right) = p' \cdot k,
\end{align*} 
where the first equality is due to the fact that $p_i = p'$ $\forall i \in V'$ and $t_{ji} =0$ $\forall i \in V\setminus V'$, and the last equality is due to the fact that the support of any strategy $t_j=(t_{j1}, \dots, t_{ji})$ of an attacker $j \in [k]$ is a subset of $V'$. 
In an equilibrium, the defender also plays a best response, i.e. she maximizes her expected utility. Therefore, given the above quantity, the defender in an equilibrium has expected utility $\max_{q \in \Delta_{\theta}} p' \cdot k$, and Condition 1 of the lemma's statement is satisfied. 

\textbf{Condition 2.} The proof is by contradiction. Assume an equilibrium profile where the defender has strategy $q = (q_1, \dots, q_{\theta})$ and there is an attacker, $a$, with strategy $t = (t_1, \dots, t_n)$ whose support includes vertex $v \in [n]$ with $p_{v} > p'$, where $p' := \min_{i \in [n]} p_i$. Then $a$'s expected payoff is 

\begin{align}\label{eq:attacker_exp_payoff}
\sum_{\substack{i \in V\\i \neq v}} t_{i} (1 - p_{i}) + t_{v} (1 - p_{v}).
\end{align} 
However, $a$ can increase her expected payoff by moving all her probability $t_{v}$ to a vertex $v'$ for which $P_{v'} = P'$, which contradicts the equilibrium assumption.

\textbf{Condition 3.} The proof is by contradiction. Suppose that in an equilibrium the defender has strategy $q^* \in \Delta_{\theta}$, where $\supp{q^*} := S$.
According to Condition 1, this strategy achieves $p^*(G)$, and let us define the set $V^* := \arg \max_{q \in \Delta_{\theta}} \min_{i \in [n]} p_i$. We denote by $N_i$ the random variable that indicates the number of attackers on vertex $i \in [n]$, under the strategy profile determined by the strategy of the defender and each attacker. The expected utility of the defender is as in \eqref{eq:defender_exp_payoff}, or equivalently, $\sum_{\substack{i \in [n]}} \left( p_{i} \cdot \mathbb{E}[N_i] \right)$. Since, as argued above, in an equilibrium each attacker has support in $V^*$, the defender's expected payoff is in fact $p^* \cdot \sum_{\substack{i \in V^*}} \mathbb{E}[N_i]$. 

For the sake of contradiction, suppose that for the expected total number of attackers on two different pure defense strategies $s_1 \in S$ and $s_2 \in [\theta]$ it holds that $ \mathbb{E} \left[ \sum_{i \in s_1\phantom{j}} N_i \right]  < \mathbb{E}\left[\sum_{j \in s_2} N_j\right]$, and equivalently $\mathbb{E}\left[\sum_{i \in s_1 \setminus s_2} N_i\right] < \mathbb{E}\left[\sum_{j \in s_2 \setminus s_1} N_j\right]$. Then, the expected payoff of the defender can be strictly increased if she chooses a strategy $q' = (q'_1, \dots, q'_{\theta})$ where $q'_{s_1} = 0$ and $q'_{s_2} = q_{s_2}^* + q_{s_1}^*$. In particular, when the defender plays $q^*$ her expected payoff is
	 
\begin{align*}
U^* = p^* \cdot \mathbb{E}\left[ \sum_{\substack{i \in V \setminus (s_1 \cup s_2)}} N_i \right] + p^* \cdot \mathbb{E}\left[ \sum_{\substack{j \in s_1 \cap s_2}} N_j\right] + p^* \cdot \mathbb{E}\left[ \sum_{\substack{l \in s_2 \setminus s_1}} N_l\right] + p^* \cdot \mathbb{E}\left[ \sum_{\substack{r \in s_1 \setminus s_2}} N_r\right],
\end{align*}
whereas when she plays $q'$ it is
	 
\begin{align*}
U' &= p^* \cdot \mathbb{E}\left[ \sum_{\substack{i \in V \setminus (s_1 \cup s_2)}} N_i\right] + p^* \cdot \mathbb{E}\left[ \sum_{\substack{j \in s_1 \cap s_2}} N_j\right] + (p^* + q_{s_1}^*) \cdot \mathbb{E}\left[ \sum_{\substack{l \in s_2 \setminus s_1}} N_l\right]\\
&\phantom{= p^* \cdot \mathbb{E}\left[ \sum_{\substack{i \in V \setminus (s_1 \cup s_2)}} N_i\right] + p^* \cdot \mathbb{E}\left[ \sum_{\substack{j \in s_1 \cap s_2}} N_j\right]} + (p^* - q_{s_1}^*) \cdot \mathbb{E}\left[ \sum_{\substack{r \in s_1 \setminus s_2}} N_r\right] \\
&= U^* + q_{s_1}^* \cdot \left( \mathbb{E}\left[ \sum_{\substack{l \in s_2 \setminus s_1}} N_l\right] - \mathbb{E}\left[ \sum_{\substack{r \in s_1 \setminus s_2}} N_r\right] \right) \\
&> U^*, 
\end{align*}
which contradicts the equilibrium assumption. Therefore, for every pure defense strategy $s_1 \in S$ it holds that $\mathbb{E}\left[\sum_{i \in s_1\phantom{j}} N_i\right] \geq \mathbb{E}\left[\sum_{j \in s_2} N_j\right]$ for every $s_2 \in [\theta]$.

Now we will prove that equilibrium is necessary for the three conditions of the statement to hold. Suppose that all conditions hold and $p^*(G)$ is achieved for the defense strategy $q=(q_1, \dots, q_\theta)$. We will show that the defender and each attacker play a best response. 

Consider an attacker $j \in [k]$ with strategy $t = (t_1, \dots, t_n)$ and support $T_j \subseteq V^*$ according to Condition 2. Her expected payoff is 
\begin{align*}
\sum_{i \in T_j} t_i (1 - p^*) = 1-p^*.
\end{align*}
It suffices to consider unilateral deviations of $j$ to pure strategies. Any pure strategy $i' \in T_j$ gives her expected payoff $1-p^*$, since $p_{i'} = p^*$ (because $T_j \subseteq V^*$). Any pure strategy $i' \in V^*\setminus T_j$ also gives her expected payoff $1 - p^*$ for the same reason. Finally, any pure strategy $i' \in V \setminus V^*$ gives her expected payoff $1 - p_{i'} < 1 - p^*$ by the definition of $V^*$. Therefore every attacker plays a best response.

Now consider the defender with strategy $q = (q_1, \dots, q_\theta)$ and support $S \subseteq [\theta]$. According to Condition 1 of the lemma's statement, $q$ results to vertices of $G$ having vertex probability $p^*$. By Condition 3, for any pure defense strategy $s_1 \in S$ it holds that $\mathbb{E}\left[\sum_{i \in s_1\phantom{j}} N_i\right] \geq \mathbb{E}\left[\sum_{j \in s_2} N_j\right]$ for every $s_2 \in [\theta]$, and let us denote $N_{max} := \mathbb{E}\left[\sum_{i \in s_1\phantom{j}} N_i \right]$. Now consider a unilateral deviation $q' = (q'_1, \dots, q'_\theta)$ of the defender. Her expected payoff is 
\begin{align*}
U(q') &= \sum_{j \in [\theta]} \left( q'_j \mathbb{E}\left[\sum_{i \in j} N_i\right] \right) \\
&\leq \sum_{j \in [\theta]} q'_j N_{max} \\
&= N_{max} \\
&= \sum_{j \in S} \left( q_j \mathbb{E}\left[\sum_{i \in j} N_i\right] \right) \\
&=U(q),
\end{align*}
where the penultimate equation holds due to the fact that $\sum_{j \in S} q_j = 1$.
Therefore, $q$ is a best response for the defender, and the three conditions of the lemma's statement imply a strategy profile that is an equilibrium.
\end{proof}

\begin{lemma}\label{lem:char_DR}
	For any given graph $G$, the equilibrium defense ratio is $\mathrm{DR}(G,S^*) = \frac{1}{p^*(G)}$, where $p^*(G) := \max_{q \in \Delta_{\theta}} \min_{i \in [n]} p_i$ and $S^*$ is an equilibrium.
\end{lemma}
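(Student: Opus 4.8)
The plan is to read off the defense value directly from the computation already carried out in the proof of Theorem~\ref{thm:char_NE}. There, while verifying Condition~1, we showed that in any equilibrium the defender's expected payoff (the defense value) equals $p' \cdot k$, where $p' := \min_{i \in [n]} p_i$ is the minimum vertex probability induced by the defender's equilibrium strategy. The driving fact was that every attacker's support lies inside $V' = \arg \min_{i \in [n]} p_i$, so all of the attackers' probability mass sits on vertices whose vertex probability is exactly $p'$; summing over attackers and using $\sum_{i} t_{ji} = 1$ collapses the defender's expected payoff in \eqref{eq:defender_exp_payoff} to $p' \cdot k$.

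First I would invoke Condition~1 of Theorem~\ref{thm:char_NE}, which asserts that in equilibrium $\min_{i \in [n]} p_i$ is maximized over all defense strategies. By the definition of the MaxMin probability this says precisely that $p' = p^*(G)$. Substituting into the expression above, the defense value in any equilibrium equals $p^*(G) \cdot k$.

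Finally I would apply the definition of the defense ratio: $\DR{G}{S^*}$ is the ratio of the number of attackers $k$ over the defense value, hence
\[
\DR{G}{S^*} = \frac{k}{p^*(G) \cdot k} = \frac{1}{p^*(G)}.
\]
Since this value does not depend on which equilibrium $S^*$ we chose, it is well-defined, which also underpins the uniqueness of the equilibrium defense value claimed earlier in the paper.

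There is essentially no hard step here: the entire quantitative content was already established inside the proof of Theorem~\ref{thm:char_NE}. The only point requiring a line of care is to confirm that the $p'$ appearing in the defense-value computation is literally the quantity maximized in Condition~1, so that it coincides with $p^*(G)$; once this identification is made the conclusion is immediate. I would also remark that an equilibrium exists (either by the sufficiency direction of Theorem~\ref{thm:char_NE}, or by Nash's theorem applied to this finite game), so that the statement is non-vacuous.
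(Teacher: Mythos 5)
Your proposal is correct and follows essentially the same route as the paper: the paper's proof likewise invokes Theorem~\ref{thm:char_NE} to conclude that every attacker's support lies on vertices defended with probability exactly $p^*(G)$, so the defense value is $p^*(G)\cdot k$ and the ratio is $1/p^*(G)$. Your additional remarks on well-definedness and existence of equilibria are sound but not needed beyond what the paper records.
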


\begin{proof}
	As it is apparent from Theorem~\ref{thm:char_NE}, in an equilibrium, every attacker will have in her support only vertices that are defended with probability exactly $p^*(G)$. Therefore, the expected number of attackers that the defender catches is $p^*(G) \cdot k$. By definition of the defense ratio, $\DR{G}{S^*} = \frac{k}{p^*(G) \cdot k} = \frac{1}{p^*(G)}$.
\end{proof}

\begin{corollary}\label{cor:from_NE_char}
	The following hold:
	\begin{enumerate}[label=(\alph*)]
		\item \label{cor:unique_NE} For a given graph $G$, in any equilibrium, the expected payoff of the defender and each attacker is unique.
		
		\item For a given graph $G$, in any equilibrium with support $S \subseteq [\theta]$ of the defender, for every $s \in S$ there exists a vertex $v \in s$ such that $p_v = p^*(G)$.
%
%
		
		\item \label{cor:eq_k=1} In any CSD game on a graph $G$, the problem of finding the equilibrium defense ratio (or equivalently, $p^*(G)$) for $k \geq 2$ attackers reduces to the same problem in the game with $k = 1$ attacker, which is a two-player constant-sum game. 
	\end{enumerate} 
\end{corollary}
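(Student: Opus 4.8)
The plan is to prove the three statements of Corollary~\ref{cor:from_NE_char} as direct consequences of Theorem~\ref{thm:char_NE} and Lemma~\ref{lem:char_DR}.

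For part~\ref{cor:unique_NE}, I would argue as follows. By Condition~1 of Theorem~\ref{thm:char_NE}, in any equilibrium the defender's strategy maximizes $\min_{i \in [n]} p_i$, so this minimum always equals the graph-invariant quantity $p^*(G)$, regardless of which equilibrium we pick. By Condition~2, every attacker places all her probability on vertices of $V^*$, i.e.\ on vertices with $p_i = p^*(G)$; hence each attacker's expected payoff is $1 - p^*(G)$, a quantity depending only on $G$. For the defender, the computation already carried out inside the proof of Theorem~\ref{thm:char_NE} shows that her expected payoff collapses to $p^*(G) \cdot k$ whenever attackers are supported in $V^*$, which is again forced in equilibrium. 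Thus all equilibrium payoffs are pinned down by the single number $p^*(G)$ and are therefore unique. This part is essentially a bookkeeping restatement and should present no obstacle.

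For part~(b), I would proceed by contradiction using Condition~3. Suppose some pure strategy $s \in S$ in the defender's support contains no vertex with $p_v = p^*(G)$; then every vertex of $s$ lies outside $V^*$, so $s \cap V^* = \emptyset$. Since in equilibrium every attacker is supported in $V^*$ (Condition~2), no attacker ever places probability on a vertex of $s$, whence $\mathbb{E}\bigl[\sum_{i \in s} N_i\bigr] = 0$. But Condition~3 requires every strategy in the support to achieve the \emph{maximum} expected number of attackers $N_{max}$ over all pure strategies. As long as $k \geq 1$ and at least one vertex carries positive attacker mass, there is a pure defense strategy covering such a vertex, forcing $N_{max} > 0$ and contradicting the inclusion of $s$ with value $0$. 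The only subtlety is verifying $N_{max} > 0$, which holds because the total attacker mass is $k \geq 1$ and is distributed over $V^*$, and every vertex of $V^*$ is covered by at least one $\lambda$-subgraph in the defender's support (else its vertex probability would be $0 < p^*(G)$, assuming $p^*(G) > 0$).

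For part~\ref{cor:eq_k=1}, the key observation is that the expected payoff of the defender, shown in the proof of Theorem~\ref{thm:char_NE} to equal $p^*(G) \cdot k$ in equilibrium, scales linearly in $k$, and that $p^*(G)$ is defined purely as a property of the defender's strategy space $\Delta_\theta$ and the vertex-probability vector it induces, with no reference to the number of attackers. Consequently, finding the equilibrium defense ratio $1/p^*(G)$ (by Lemma~\ref{lem:char_DR}) requires only computing $p^*(G)$, which is determined by the single-attacker game: one attacker best-responding by Condition~2 already forces the defender onto a MaxMin strategy, and additional attackers merely replicate the same best-response structure without altering $V^*$ or $p^*(G)$. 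I would finish by noting that the $k=1$ game is two-player and constant-sum, since the attacker's payoff $1 - p^*(G)$ and the defender's payoff $p^*(G)$ sum to the constant $1$ in equilibrium; more generally, at any pure profile the attacker scores $1$ or $0$ exactly as the defender scores $0$ or $1$ on that attacker, so the two payoffs always sum to $1$. The main conceptual point to get right here is articulating why the reduction is payoff-preserving in the appropriate (ratio) sense rather than claiming identical absolute payoffs, and that is where I would focus the argument.
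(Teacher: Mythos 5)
Your proposal is correct and follows essentially the same route as the paper: all three parts are read off directly from Theorem~\ref{thm:char_NE} and Lemma~\ref{lem:char_DR}, with identical payoff computations ($1-p^*(G)$ for each attacker, $p^*(G)\cdot k$ for the defender) in part~(a) and the same $k$-independence of $p^*(G)$ in part~(c). The only cosmetic difference is in part~(b), where you derive the contradiction from Condition~3 (a supported pure strategy attracting zero expected attackers cannot attain the positive maximum $N_{max}$), while the paper exhibits the profitable deviation directly from Condition~2; since Condition~3 was itself established by exactly that deviation argument, the two are equivalent.
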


\begin{proof}
	\begin{enumerate}[label=\emph{(\alph*)}]
		\item By Theorem \ref{thm:char_NE}, in an equilibrium the defender chooses a strategy that induces probability $p^*(G)$ to some vertex of $G$ (Condition 1). Also, each of the attackers has in her support $T$ only vertices with vertex probability $p^*(G)$. Therefore, all attackers attack only such vertices and the expected payoff of the defender is $k \cdot p^*(G)$. Consider also an attacker with strategy $t = (t_1, t_2, \dots, t_n)$. Her expected payoff is $\sum_{i \in [n]} t_i (1 - p_{i})$, where $p_i$ is the vertex probability of vertex $i$. This value is equal to $\sum_{i \in T} t_i (1 - p^*(G)) = 1 - p^*(G)$. Since $p^*(G)$ is unique for a graph $G$, the expected payoffs of the defender and each attacker is unique.
		
		\item The proof is by contradiction. Consider an equilibrium where the defender's strategy is $q \in [\theta]$ with support $S$, and there exists a pure strategy $s \in S$ for which every vertex $v \in s$ has $p_v > p^*(G)$. By Condition 2 of Theorem \ref{thm:char_NE}, no attacker has in her support a vertex in $s$. Therefore, the defender can strictly increase her expected payoff by moving all her probability $q_s > 0$ from $s$ to some other pure strategy $s'$ that contains a vertex which is in the support of some attacker.
		
%
%
%
		
		\item Observe that for any given graph $G$, the quantity $p^*(G)$, by definition, only depends on the graph and not the number of attackers $k$. That is, $p^*(G)$ is the same for every $k \geq 1$. Lemma \ref{lem:char_DR} states that in any equilibrium $S^*$, it is $\DR{G}{S^*} = \frac{1}{p^*(G)}$, therefore the defense ratio in an equilibrium does not depend on $k$. This means that when we are given $G$ and we are interested in the equilibrium defense ratio, we might as well consider the game with the single defender and a single attacker. By definition of the game (see Section \ref{sec:prel}) the latter is a two-player constant-sum game. 
	\end{enumerate}
	
\end{proof}

The following corollary implies that coordination (resp. individual selfishness) of the attackers cannot increase the attackers' (resp. defender's) expected payoff in equilibrium.

\begin{corollary}
	\label{cor:coord}
	Every equilibrium with uncoordinated attackers (i.e. as described in Section \ref{sec:prel}) is an equilibrium with coordinated (i.e. centrally controlled) attackers, and vice versa. 
\end{corollary}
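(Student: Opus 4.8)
The plan is to model the coordinated version as a two-player constant-sum game between the defender and a single \emph{leader} who controls all $k$ attackers: the defender still chooses $q \in \Delta_{\theta}$, while the leader chooses a joint distribution $\tau$ over the placements $V^k$ of the $k$ attackers, with the aim of maximizing the expected number of attackers that escape (this quantity plus the defender's payoff being the constant $k$). Writing $t_{ji}$ for the marginal probability that attacker $j$ is placed on vertex $i$ under $\tau$, the key observation — and the whole engine of the proof — is that, by linearity of expectation, the defender's expected payoff equals $\sum_{i \in [n]} p_i \sum_{j \in [k]} t_{ji}$, which is exactly the quantity in \eqref{eq:defender_exp_payoff} and hence depends only on the marginals of $\tau$ and not on any correlation among the attackers. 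Dually, the leader's payoff equals $\sum_{j \in [k]} \sum_{i \in [n]} t_{ji}(1 - p_i)$, which is \emph{separable} across attackers: it is simply the sum of the individual escape probabilities \eqref{eq:attacker_exp_payoff}. Thus neither player's payoff can be affected by correlating the attackers' choices.

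For the forward direction, I would start from an uncoordinated equilibrium $(q, t_1, \dots, t_k)$ and hand the product distribution $\tau = t_1 \times \cdots \times t_k$ to the leader; this induces exactly the marginals $t_{ji}$. Because the defender's payoff is a function of these marginals alone, $q$ remains a best response against $\tau$ precisely because it was one against $(t_1, \dots, t_k)$. For the leader, separability reduces maximizing the total escape to maximizing each attacker's escape probability independently; since each $t_j$ already maximizes \eqref{eq:attacker_exp_payoff} at the uncoordinated equilibrium, the product $\tau$ is a best response, so $(q, \tau)$ is a coordinated equilibrium.

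For the converse, I would take a coordinated equilibrium $(q, \tau)$ and read off the marginals $t_j := (t_{j1}, \dots, t_{jn})$. Again $q$ is a best response to the profile $(t_1, \dots, t_k)$, since the defender's payoff only sees the marginals. It remains to show each $t_j$ is an individual best response, and here separability does the work in reverse: if some attacker $j$ could strictly increase \eqref{eq:attacker_exp_payoff} by deviating to $t_j'$, then replacing attacker $j$'s marginal by $t_j'$ in the product of all marginals would strictly increase the leader's separable objective, contradicting the optimality of $\tau$. Hence $(q, t_1, \dots, t_k)$ satisfies the best-response conditions of Theorem~\ref{thm:char_NE} and is an uncoordinated equilibrium.

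The only genuine subtlety, and the point I would emphasize, is the richer strategy space of the leader: a priori the leader may use arbitrary joint distributions rather than product ones, so one must rule out that correlation buys the attackers extra power. The linearity-of-expectation observation of the first paragraph is exactly what closes this gap, collapsing the leader's seemingly larger problem onto the marginals and thereby onto $k$ independent single-attacker problems, in the spirit of Corollary~\ref{cor:from_NE_char}\ref{cor:eq_k=1}. Everything else is a routine verification against the equilibrium definition and Theorem~\ref{thm:char_NE}.
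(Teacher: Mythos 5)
Your proof is correct, and it takes a genuinely different route from the paper's. The paper argues via its equilibrium characterization (Theorem~\ref{thm:char_NE}): it observes that against a best-defense strategy every attacker, coordinated or not, must play only vertices covered with probability $p^*(G)$, and then checks that Conditions 1--3 of the characterization are satisfied in both settings. You instead formalize the coordinated game explicitly (a leader choosing a joint distribution over $V^k$) and show by linearity of expectation that both players' payoffs are functions of the marginals alone, so the leader's problem is separable and collapses onto $k$ independent single-attacker best-response problems. Your approach buys something the paper's proof leaves implicit: it rules out, head-on, the possibility that correlation among the attackers gives the leader extra power, which is the one place a skeptical reader might pause over the paper's more informal argument. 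The paper's approach buys brevity and reuses machinery already established (the characterization theorem and the fact that equilibrium supports lie in $V^*$), but it never pins down the coordinated strategy space, whereas you do. Both arguments are sound; yours is the more self-contained and the more careful about the model of coordination, and it is consistent in spirit with the reduction to $k=1$ in Corollary~\ref{cor:from_NE_char}\ref{cor:eq_k=1}.
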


\begin{proof}
	Let $q^*$ be a best-defense strategy for the defender. Then, in any best response of any 
	attacker, coordinated or not, every attacker plays only pure strategies that yield maximum
	payoff against $q^*$; i.e. they play only strategies that are defended with probability $p^*(G)$. 
	If this was not the case, either an uncoordinated attacker could increase her payoff by 
	unilaterally changing her strategy, or the ``coordinator'' could increase the payoff the
	attackers collectively get by dictating all the attackers to play vertices that are covered with
	probability $p^*(G)$. 
	
	So, assume that we have an equilibrium in the uncoordinated case. This
	is an equilibrium for the coordinated case as well: according to Theorem \ref{thm:char_NE}, all attackers play vertices that are
	defended with probability $p^*(G)$ and thus the expected collective payoff of the attackers cannot be 
	increased, and furthermore the expected total number of attackers on the vertices of a pure strategy that is in the support of the defender is maximized over all pure defense strategies, so no unilateral deviation of the defender can increase her expected payoff. 
	
	Conversely, in any equilibrium in the coordinated setting the ``coordinator'' dictates 
	all the attackers to attack vertices that are covered with probability $p^*(G)$, satisfying Conditions 1,2 of Theorem \ref{thm:char_NE}. Also in the equilibrium of the coordinated setting, similarly to Condition 3 of Theorem \ref{thm:char_NE}, the ``coordinator'' will have placed the attackers in a way such that the vertices of any pure defense strategy in the support have maximum expected total number of attackers over all pure defense strategies; otherwise the defender can increase her expected payoff by neglecting a pure strategy with smaller than maximum expected total number of attackers, and move the probability assigned on that pure strategy to another one that has maximum expected total number of attackers. By Theorem \ref{thm:char_NE}, this is an equilibrium also for the uncoordinated setting.
\end{proof}

The following theorem provides an algorithm for computing an equilibrium for any CSD game, whose running time is polynomial in $n$ when $\lambda = c$ or $\lambda = n - c$, where $c$ is a constant natural.

\begin{theorem}\label{thm:LP_for_p^*}
	For some given graph $G$ and parameter $\lambda$, there is an algorithm that computes $p^*(G)$ and also finds an equilibrium in time polynomial in $\binom{n}{\lambda}$.
\end{theorem}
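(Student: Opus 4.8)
The plan is to reduce both the computation of $p^*(G)$ and the construction of an equilibrium to solving a two-player constant-sum matrix game, which in turn reduces to a pair of linear programs of size polynomial in $\binom{n}{\lambda}$. By Corollary~\ref{cor:from_NE_char}\ref{cor:eq_k=1} it suffices to treat the single-attacker case, since $p^*(G)$ and the equilibrium defense ratio are independent of $k$; once the single-attacker solution is in hand I will lift it to arbitrary $k$.

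First I would enumerate the defender's action set: iterate over all $\binom{n}{\lambda}$ vertex subsets of size $\lambda$ and retain those inducing a connected subgraph, a test costing $\mathrm{poly}(n)$ per subset. This yields the list $[\theta]$ with $\theta \le \binom{n}{\lambda}$ together with the incidence matrix $M \in \{0,1\}^{\theta \times n}$ defined by $M_{j,i} = 1$ iff $i \in j$. In the single-attacker game the defender's expected payoff against an attacker strategy $t$ is $\sum_{i} p_i t_i = q^{\top} M t$, so the situation is exactly the constant-sum matrix game with payoff matrix $M$ (defender maximizing, attacker minimizing), whose value is $\max_q \min_t q^{\top} M t = \max_q \min_{i\in[n]} \sum_{j:\, i\in j} q_j = \max_q \min_{i\in[n]} p_i = p^*(G)$.

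Next I would compute this value and optimal strategies by linear programming. The defender's program is
\begin{align*}
\max\ p \quad\text{s.t.}\quad \sum_{j:\, i\in j} q_j \ge p\ \ (\forall i\in[n]), \qquad \sum_{j\in[\theta]} q_j = 1, \qquad q\ge 0,
\end{align*}
whose optimal value is $p^*(G)$ and whose optimal solution $q^*$ is a best-defense strategy. The companion program $\min_{t\in\Delta_n}\max_{s\in[\theta]}\sum_{i\in s} t_i$ produces an optimal attacker strategy $t^*$; by the minimax theorem its value is again $p^*(G)$, so $q^*$ and $t^*$ form a saddle point of the game. Both programs have $O\!\left(\binom{n}{\lambda}\right)$ variables and $O\!\left(\binom{n}{\lambda}+n\right)$ constraints, hence size polynomial in $\binom{n}{\lambda}$, and are solvable in polynomial time (e.g.\ by the ellipsoid or an interior-point method).

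Finally I would verify that $(q^*, t^*, \dots, t^*)$, with every one of the $k$ attackers playing $t^*$, is an equilibrium by checking the three conditions of Theorem~\ref{thm:char_NE}. Condition~1 holds since $q^*$ attains $\max_q \min_i p_i = p^*(G)$. Condition~2 holds because, being a best response to $q^*$, the attacker places weight only on vertices minimizing $p_i$, i.e.\ on $V^*$. Condition~3 holds because, with all attackers playing $t^*$, the expected number of attackers on a pure strategy $s$ is $k\sum_{i\in s} t^*_i = k\,(M t^*)_s$, and the best-response optimality of $q^*$ forces positive weight only on rows $s$ maximizing $(Mt^*)_s$. The main points requiring care are the lifting from $k=1$ to general $k$ — confirming that replicating $t^*$ across all attackers preserves Condition~3 — and arguing that the enumeration of $\lambda$-subgraphs together with the LP solve stays within the claimed $\mathrm{poly}\!\left(\binom{n}{\lambda}\right)$ budget (noting that $\binom{n}{\lambda}\ge n$ except in the degenerate case $\lambda=n$, which is trivial).
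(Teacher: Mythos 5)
Your proposal is correct and follows essentially the same route as the paper: enumerate the candidate $\lambda$-subgraphs with a connectivity check, solve the maximin LP to obtain $p^*(G)$ and a best-defense strategy, and replicate a single-attacker best response across all $k$ attackers via Corollary~\ref{cor:from_NE_char}. The only difference is that the paper takes the attacker's equilibrium strategy to be the uniform distribution over $V^*$, whereas you obtain $t^*$ from the companion LP and invoke LP duality, which if anything makes the verification of Condition~3 of Theorem~\ref{thm:char_NE} more transparent.
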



\begin{proof}
	Given a graph $G$, the number of attackers $k \geq 1$, and some $\lambda \in \{1,2,\dots, n\}$, the action set $D$ of the defender is constructed by the vertex sets of at most $\binom{n}{\lambda}$ $\lambda$-subgraphs, so
	for $D$'s cardinality $\theta$ it holds that $\theta \leq \binom{n}{\lambda}$. Consider now the mixed strategy $q \in \Delta_{\theta}$ of the defender, where each pure strategy  $j \in [\theta]$ is assigned probability $q_j$. Consider also the vertex probability $p_i$ for each vertex $i \in [n]$. According to Corollary \ref{cor:from_NE_char} \ref{cor:unique_NE} and \ref{cor:eq_k=1}, the unique $p^*(G)$ in the case of a single attacker can be used to derive an equilibrium for the case of $k \geq 2$ attackers. Therefore, we will find $p^*(G)$ for a single attacker, find an equilibrium for that case, and then extend this equilibrium to one in the case of $k \geq 2$ attackers. In more detail, after we find the defense strategy $q^*$ that maximizes $\min_{i \in [n]} p_i$ (Condition 1 of Theorem \ref{thm:char_NE}), i.e. yields $p^*(G)$ on the set $V^* := \arg \max_{q \in \Delta_{\theta}} \min_{i \in [n]} p_i$, an equilibrium is achieved if the single attacker assigns probability $1/|V^*|$ to each vertex of $V^*$; that is because all conditions of Theorem \ref{thm:char_NE} are satisfied. Then, an equilibrium for $k \geq 2$ is achieved if every attacker plays the same strategy as the single attacker; that is because again all conditions of Theorem \ref{thm:char_NE} are satisfied. 
	
	The crucial observation that allows us to design such an algorithm is that we can compute $p^*(G)$ via a Linear Program which has $O\left(\binom{n}{\lambda}\right)$ many variables and $O(n)$ constraints, and therefore its running time is in the worst case polynomial in $\binom{n}{\lambda}$, for $\lambda \in \{ 2,3, \dots, n-1 \}$. For the trivial cases $\lambda=1$ and $\lambda = n$, $D = \{ \{ i \} | i \in V \}$ and $D = V$ respectively, therefore $p^*(G) = 1/n$ and $p^*(G) = 1$ respectively. So in the rest of the proof we will imply that $\lambda \in \{ 2, 3\dots, n-1 \}$. It remains to show how $p^*(G)$ is computed.
%
%
	
	Let us denote $p^*:=p^*(G) := \max_{q \in \Delta_{\theta}} \min_{i \in [n]} p_i$. 
	The computation of $p^*$ can be done as follows: First, consider each of the $\binom{n}{\lambda}$ subsets of $V$ of size $\lambda$, and find if it is a proper $\lambda$-subgraphs of $G$ (i.e. connected); this can be done by running a Depth (or Breadth) First Search algorithm for each subset of size $\lambda$. If it is not, then continue with the next subset. If it is, we consider it in the action set $[\theta]$, and assign to it a variable $q_{j}$ which stands for its assigned probability in a general defense strategy. Now, by definition, for some vertex $i \in [n]$,  $p_i = \sum_{\substack{j \in [\theta]\\i \in j}} q_j$. Therefore, we will consider only pure strategies $j$ which are $\lambda$-subgraphs to create the $p_i$'s. To compute the minimum $p_i$ over all $i$'s we introduce the variable $p'$ and write the following set of $n$ inequalities as a constraint in our Linear Program:
	\begin{align*}
	\sum_{\substack{j \in [\theta]\\i \in j}} q_j \geq p' \quad, \text{ for } i \in \{1, 2, \dots, n\}.
	\end{align*}
	The variable constraints are $p', q_1, q_2, \dots, q_\theta \geq 0$ and also $\sum_{j=1}^{\theta} q_j = 1$, and all of the aforementioned constraints can be written in canonical form by applying standard transformations.
	Finally, the objective function of the Linear Program is variable $p'$ and we require its maximization, which is the value $p^*$. 
\end{proof}

\subsection{Connections to other types of games}

Although CSD games are defined as a normal form game with $k+1$ players, we can observe
that there are equivalent to other well-studied types of games: polymatrix games and 
Stackelberg games.

A polymatrix game is defined by a graph where every vertex represents a player and every 
edge represents a two-player game played by the endpoints of the edge. Every player has
the same set of pure strategies in every game he is involved and to play the game he plays
the same (mixed) strategy in every game. The payoff of every player is the sum they get 
from every two-player game they participate in. In a CSD game we observe the following.
Firstly, the payoff of every attacker depends only on the strategy the defender plays, thus
every attacker is involved only in one two-player game. In addition, all the attackers have the
same set of pure strategies and they share the same payoff matrix.
Similarly, the payoff the defender gets from catching an attacker depends only on the 
strategy the defender and this specific attacker chose. Hence, the payoff of the defender 
can be decomposed into a sum of payoffs from $k$ two-player games. So, a CSD game can be
seen as a polymatrix game where the underlying graph is a star with $k$ leaves that 
correspond to the attackers and the defender is the center of the star. Although many-player
polymatrix games have exponentially smaller representation size compared to the equivalent 
normal-form representation, we should note that this polymatrix game is of exponential size
in the worst case since the defender can have exponential in $n$ pure strategies to choose 
from.

A Stackelberg game is an extensive form two-player game. In the first round, one of the players
commits to a (mixed) strategy. In the second round, the other player chooses a best response
against the committed strategy of her opponent. In a StackeIberg equilirbium the first player is
playing a strategy that maximizes her expected payoff, given that the second player plays a best response (mixed strategy). 
The MaxMin probability $p^*(G)$ for a CSD game on a graph $G$ corresponds to a 
Stackelberg equilibrium. By Corollary \ref{cor:from_NE_char}(c), any CSD game with $k\geq1$ attackers has the same $p^*$ as that of the case with $k=1$. Furthermore, as in a Stackelberg game, in the CSD game with $k=1$ the defender chooses a mixed strategy that maximizes her expected payoff, given that the attacker plays a best response (mixed strategy).
Therefore, when we are interested in the defense-ratio in equilibrium of a CSD game for some arbitrary $k \geq 1$, finding a Stackelberg equilibrium of the corresponding CSD game with $k=1$ suffices.


\section{Defense-Optimal Graphs}

We now focus our attention on defense-optimal graphs. We first characterize defense-optimal
graphs with respect to the MaxMin probability $p^*$ and then use this characterization to 
analyze more specific classes of graphs like cycles and trees. We begin by an exact
computation of the equilibrium defense ratio of any defense-optimal graph.

\begin{theorem}\label{thm:def-opt_ratio}
	In any defense-optimal graph $G$, we have that $\DR{G}{S^*} =\frac{n}{\lambda}$.
\end{theorem}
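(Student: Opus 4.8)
The plan is to reduce the claim, via Lemma~\ref{lem:char_DR}, to a statement purely about the MaxMin probability. Since $\DR{G}{S^*} = 1/p^*(G)$, a defense-optimal graph (a minimizer of the equilibrium defense ratio over all $n$-vertex graphs) is precisely a \emph{maximizer} of $p^*$. So it suffices to show that the largest value of $p^*(G)$ attainable by any $n$-vertex graph is exactly $\lambda/n$, and that it is attained; the equality $\DR{G}{S^*} = n/\lambda$ then follows for every defense-optimal $G$.

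First I would establish a universal upper bound $p^*(G) \le \lambda/n$. The key observation is a counting identity: for \emph{any} graph and \emph{any} defense strategy $q \in \Delta_\theta$, every pure strategy is a $\lambda$-subgraph and hence covers exactly $\lambda$ vertices, so
\[
\sum_{i \in [n]} p_i = \sum_{i \in [n]} \sum_{j \in [\theta]:\, i \in j} q_j = \sum_{j \in [\theta]} q_j\, |j| = \lambda \sum_{j \in [\theta]} q_j = \lambda .
\]
Because the $n$ vertex probabilities sum to $\lambda$, their minimum is at most their average: $\min_{i \in [n]} p_i \le \lambda/n$. Taking the maximum over all $q$ gives $p^*(G) \le \lambda/n$, and therefore $\DR{G}{S^*} = 1/p^*(G) \ge n/\lambda$ for every $n$-vertex graph.

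Second, I would show the bound is tight by exhibiting a graph that attains it, which pins the minimum defense ratio to exactly $n/\lambda$ rather than merely bounding it below. The cycle $C_n$ is convenient: its $\lambda$-subgraphs are precisely the $n$ arcs of $\lambda$ consecutive vertices, and each vertex lies in exactly $\lambda$ of them. Letting the defender play the uniform distribution $q_j = 1/n$ over these arcs makes every vertex probability equal to $\lambda \cdot (1/n) = \lambda/n$, so $\min_{i \in [n]} p_i = \lambda/n$, meeting the upper bound and giving $p^*(C_n) = \lambda/n$. The degenerate case $\lambda = n$ yields $p^* = 1 = \lambda/n$ trivially.

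Combining the two steps, the minimum of $\DR{G}{S^*}$ over all $n$-vertex graphs equals $n/\lambda$ and is attained; hence every defense-optimal graph $G$ satisfies $\DR{G}{S^*} = n/\lambda$. The only real content is the counting identity $\sum_i p_i = \lambda$, which is clean; I expect the main (mild) obstacle to be the attainability step, where one must verify that some concrete graph genuinely achieves $p^* = \lambda/n$ so that the optimum is exactly $n/\lambda$, and the cycle construction discharges this for all $\lambda \le n$.
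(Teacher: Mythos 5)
Your proof is correct and follows essentially the same route as the paper: the counting identity $\sum_{i} p_i = \lambda$ (which the paper derives via indicator variables inside a proof by contradiction rather than directly) yields $p^*(G) \le \lambda/n$ by a min-vs-average argument, and a witness graph shows the bound is attained. The only difference is the choice of witness: the paper uses a line graph with $n = r\lambda$ vertices, which implicitly assumes $\lambda$ divides $n$, whereas your cycle $C_n$ works for every $n$ and every $\lambda \le n$ and coincides with the paper's own subsequent observation that all cyclic graphs are defense-optimal.
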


\begin{proof}
	First we will show that $\frac{n}{\lambda}$ is a lower bound on the Price of Defense and 
	then prove that it is tight. According to Lemma \ref{lem:char_DR}, a lower bound on 
	\PoD$(\lambda)$ can be found by equivalently founding an upper bound on 
	$p^*(G)$
	over all graphs $G$ with $n$ vertices. 
	Let us show that $p^*(G) \leq \frac{\lambda}{ n}$ for every $G$.
	
	Suppose there is a graph $G'$ such that $p^*(G') > \frac{\lambda}{ n}$, and let us focus only on $G'$. Suppose also that the defender has an action set $[\theta]$ on $G'$. Fix the strategy $q=(q_1, \dots, q_{\theta}) \in \Delta_{\theta}$ that achieves $p^*(G')$. Then, by definition of $p^*(G')$, for the vertex probabilities $p_i$ it holds that $p_i > \frac{\lambda}{ n}$ for all $i \in [n]$. Therefore, it is
	\begin{align}\label{eq:sumP>lambda}
	\sum_{i=1}^{n} p_i > \lambda.
	\end{align}
	Also, by definition of a defense strategy, if $X$ denotes the random variable corresponding to the number of vertices that the defender covers, then:
	\begin{equation}\label{eq:num_of_cov_vert1}
	\mathbb{E}[X] = \sum_{j=1}^{\theta} q_{j} \cdot |L_j| = \lambda \qquad 
	(\text{ where $L_j$ is a $\lambda$-subgraph of $G$, hence $|L_j|=\lambda \quad \forall j \in [\theta]$}).
	\end{equation}

	Let us introduce the indicator variables $X_{ij}$, $i \in [n]$, $j \in [\theta]$ with value 1 if vertex $i \in L_j$, and 0 otherwise. Then,
	\begin{align}\label{eq:num_of_cov_vert2}
	\mathbb{E}[X] &= \sum_{j=1}^{\theta} q_{j} \sum_{i=1}^{n} X_{ij} \nonumber  \\
	&= \sum_{i=1}^{n} \sum_{j=1}^{\theta} q_{j} X_{ij} \nonumber  \\
	&= \sum_{i=1}^{n} p_{i} \\
	&> \lambda   \qquad (\text{by inequality } \eqref{eq:sumP>lambda}),  \nonumber  
	\end{align}
	which contradicts \eqref{eq:num_of_cov_vert1}.
	
	It remains to show that the lower bound $\frac{n}{\lambda}$ on the \PoD$(\lambda)$ is tight. This is easy to do by showing that $\frac{\lambda}{ n}$ is a tight upper bound on $p^*(G)$: observe that every vertex of the line graph with $n= r \cdot \lambda$ vertices, where $r \in \mathbb{N}^\ast$, can be covered with $\frac{n}{\lambda}$ disjoint pure strategies of the defender. Therefore, the defender can assign probability $1/(n / \lambda)$ to each pure strategy, and in that case, $p^*(G) = \frac{\lambda}{n}$. 
\end{proof}

As an intermediate corollary of  Theorem~\ref{thm:def-opt_ratio} we get the following 
characterisation of defense-optimal graphs.

\begin{corollary}\label{cor:def_opt}
	A graph $G$ is defense-optimal if and only if all of its vertices are defended with probability $\frac{\lambda}{n}$.
\end{corollary}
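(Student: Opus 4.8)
The plan is to reduce the statement to the single equality $p^*(G) = \frac{\lambda}{n}$ and then exploit a simple counting identity. By Lemma~\ref{lem:char_DR} we have $\DR{G}{S^*} = \frac{1}{p^*(G)}$, and by Theorem~\ref{thm:def-opt_ratio} a graph is defense-optimal precisely when $\DR{G}{S^*} = \frac{n}{\lambda}$. Combining these, $G$ is defense-optimal if and only if $p^*(G) = \frac{\lambda}{n}$. It therefore suffices to show that $p^*(G) = \frac{\lambda}{n}$ holds if and only if there is a defense strategy under which every vertex has vertex probability exactly $\frac{\lambda}{n}$.

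For the easier direction, I would suppose a defense strategy $q$ satisfies $p_i = \frac{\lambda}{n}$ for all $i \in [n]$. Then $\min_{i} p_i = \frac{\lambda}{n}$ for this particular $q$, so by definition $p^*(G) \geq \frac{\lambda}{n}$. The proof of Theorem~\ref{thm:def-opt_ratio} already establishes the matching upper bound $p^*(G) \leq \frac{\lambda}{n}$ for every graph, hence $p^*(G) = \frac{\lambda}{n}$ and $G$ is defense-optimal.

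For the converse, I would suppose $G$ is defense-optimal, so $p^*(G) = \frac{\lambda}{n}$, and let $q$ be a defense strategy achieving it; by definition of $p^*$ we then have $p_i \geq \frac{\lambda}{n}$ for every $i \in [n]$. The key observation is the identity appearing inside the proof of Theorem~\ref{thm:def-opt_ratio}: since every pure defense strategy is a $\lambda$-subgraph covering exactly $\lambda$ vertices, the expected number of covered vertices equals $\lambda$ regardless of $q$, which by equations~\eqref{eq:num_of_cov_vert1}--\eqref{eq:num_of_cov_vert2} gives $\sum_{i=1}^{n} p_i = \lambda$. If some vertex had $p_i > \frac{\lambda}{n}$, then, since all $p_i \geq \frac{\lambda}{n}$, we would obtain $\sum_{i=1}^{n} p_i > n \cdot \frac{\lambda}{n} = \lambda$, a contradiction. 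Hence $p_i = \frac{\lambda}{n}$ for all $i$.

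I do not expect a genuine obstacle here: all the analytic content — the upper bound $p^*(G) \leq \frac{\lambda}{n}$ and the counting identity $\sum_i p_i = \lambda$ — is already available from Theorem~\ref{thm:def-opt_ratio}, so the remaining step is a one-line averaging argument. The only mild subtlety is reading the phrase ``all vertices are defended with probability $\frac{\lambda}{n}$'' as the existence of a single defense strategy simultaneously realizing these vertex probabilities; under that reading the equivalence is immediate.
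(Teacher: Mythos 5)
Your proposal is correct and follows essentially the same route as the paper: both directions reduce to $p^*(G)=\frac{\lambda}{n}$ via Theorem~\ref{thm:def-opt_ratio} and Lemma~\ref{lem:char_DR}, and the converse uses the counting identity $\sum_{i=1}^{n}p_i=\lambda$ from equations~\eqref{eq:num_of_cov_vert1}--\eqref{eq:num_of_cov_vert2}. The only (cosmetic) difference is that you derive the contradiction directly from $p_i\geq\frac{\lambda}{n}$ for all $i$ plus the sum identity, whereas the paper argues that a vertex with $p_v>\frac{\lambda}{n}$ forces another with $p_u<\frac{\lambda}{n}$ and hence $p^*(G)<\frac{\lambda}{n}$.
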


\begin{proof}
	Necessity of defense-optimality is trivial: every vertex has vertex probability $\frac{\lambda}{n}$, therefore $p^*(G) = \frac{\lambda}{n}$, so by Theorem \ref{thm:def-opt_ratio} the graph is defense-optimal.
	
	Sufficiency of defense-optimality is also easy to see using the equations \eqref{eq:num_of_cov_vert1}, \eqref{eq:num_of_cov_vert2} of the proof of Theorem \ref{thm:def-opt_ratio}. Suppose that the graph is defense-optimal and consider an equilibrium where the defense strategy is $q = (q_1, \dots, q_{\theta})$. Then the sum of vertex probabilities is $\sum_{i=1}^{n} p_{i} = \lambda$ according to the aforementioned equations. Therefore, if there exists a vertex $v$ with vertex probability $p_v > \frac{\lambda}{n}$ then there is another vertex $u$ with probability $p_u < \frac{\lambda}{n}$. This means that $p^*(G) <\frac{\lambda}{n}$, and as a result the graph is not defense-optimal which contradicts our assumption.
\end{proof}

Someone may wonder whether Corollary~\ref{cor:def_opt} can be further exploited to prove 
that, in general, 	best-defense strategies in defense-optimal graphs are uniform, i.e. every 
pure strategy $s$ in the support $S$ of the defender is assigned probability $1/|S|$. However, as we demonstrate in Figure~\ref{graph1} this is not the case. On the other hand, this claim is
true for cyclic graphs and trees.

\begin{figure}[h]
	\begin{center}
		\includegraphics[scale=0.42]{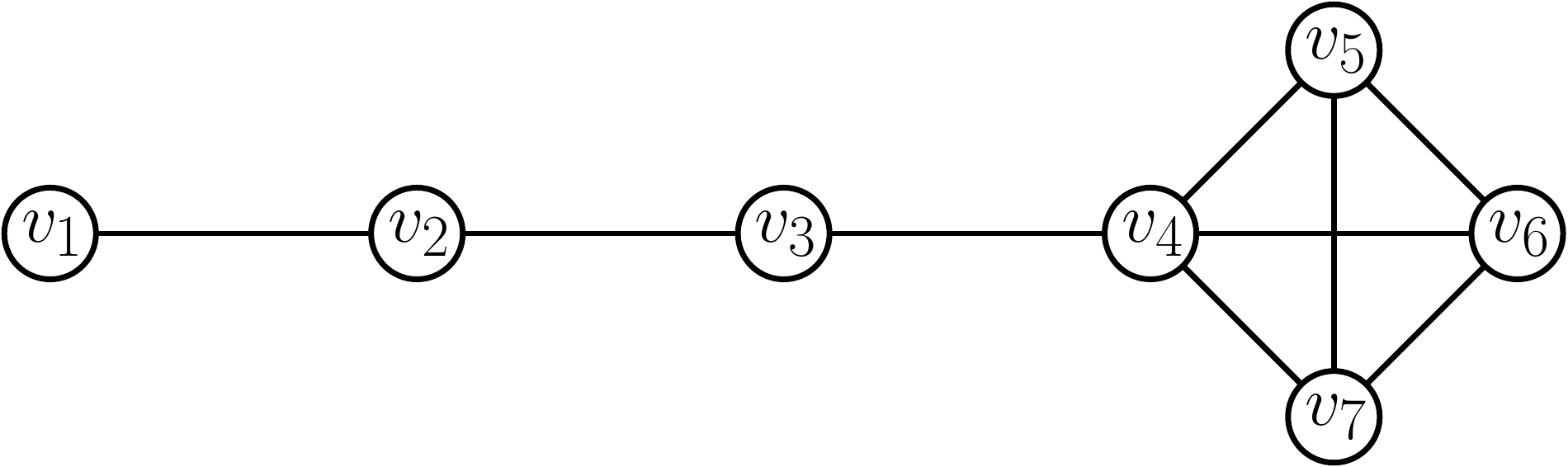}
	\end{center}
	\caption{Here $n=7$, $\lambda = 3$ and $p^*(G) = 3/7$ is achievable by assigning probability $3/7$ to pure strategy $\{v_1, v_2, v_3\}$ and probability $1/7$ to each of the pure strategies $\{v_4, v_5, v_6\}$, $\{v_4, v_5, v_7\}$, $\{v_4, v_6, v_7\}$, $\{v_5, v_6, v_7\}$, so the graph is defense optimal. However, observe that $v_1$ cannot participate in more than one pure strategies, so in a uniform defense strategy with support of size $r$, the vertex probability $p_{v_1}$ has to be $1/r$ (by definition of uniformity), but it also has to be $3/7$. Since $r \in \mathbb{N}$, this is a contradiction.}\label{graph1}
\end{figure}

\begin{observation}
	All cyclic graphs are defense-optimal.
\end{observation}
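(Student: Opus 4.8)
The plan is to invoke Corollary~\ref{cor:def_opt}, which reduces defense-optimality to exhibiting a single defense strategy under which every vertex is covered with probability exactly $\frac{\lambda}{n}$. So it suffices to construct one such strategy on the cycle $C_n$ and the characterisation does the rest.

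First I would pin down the pure strategies available to the defender on a cycle. Label the vertices $0, 1, \dots, n-1$ cyclically. Since an induced subgraph is determined by its vertex set, and such a vertex set induces a connected subgraph of a cycle precisely when it consists of consecutive vertices, the $\lambda$-subgraphs of $C_n$ are exactly the $n$ arcs $A_i := \{i, i+1, \dots, i+\lambda-1\}$ with indices taken modulo $n$, one for each starting vertex $i \in \{0, \dots, n-1\}$.

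Next I would let the defender play the uniform distribution assigning probability $\frac{1}{n}$ to each arc $A_i$. The key counting step is that a fixed vertex $v$ lies in arc $A_i$ exactly when $i \in \{v-\lambda+1, \dots, v\} \pmod n$, that is, in exactly $\lambda$ of the $n$ arcs. Hence its vertex probability is $p_v = \lambda \cdot \frac{1}{n} = \frac{\lambda}{n}$, and by the rotational symmetry of the construction this holds simultaneously for every vertex. Applying Corollary~\ref{cor:def_opt} then yields that $C_n$ is defense-optimal.

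I do not expect any genuine obstacle here; the construction is a uniform distribution over arcs plus an elementary coverage count. The only points requiring care are the structural verification that the connected induced subgraphs of size $\lambda$ in a cycle are \emph{precisely} the arcs of $\lambda$ consecutive vertices (so that the coverage count is exact and no other defender strategy need be considered), and a quick check of the degenerate extremes $\lambda = 1$ and $\lambda = n$, both of which the same uniform-over-arcs construction handles.
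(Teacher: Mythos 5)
Your proposal is correct and follows essentially the same route as the paper: a uniform distribution over the $n$ arcs of $\lambda$ consecutive vertices, the observation that each vertex lies in exactly $\lambda$ of them so that $p_v = \lambda/n$, and an appeal to Corollary~\ref{cor:def_opt}. The extra care you take in verifying that the connected induced $\lambda$-subgraphs of a cycle are precisely these arcs is a welcome tightening but does not change the argument.
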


\begin{proof}
	Consider an arbitrary cyclic graph $G$ with $n$ vertices. We will show that the graph can achieve vertex probability $p_{i}=\frac{\lambda}{n}$ for every $i \in [n]$, thus by Corollary~\ref{cor:def_opt} it is defense-optimal. Consider the whole action set $D$ of the defender, i.e. every path starting from a vertex $i$ going clockwise and ending at vertex $i+\lambda - 1$. Observe that there are only $n$ such paths, therefore $\theta:=|D|=n$. By assigning probability $\frac{1}{n}$ to each pure strategy $j \in [\theta]$, since each vertex is in exactly $\lambda$ pure strategies, each vertex $i \in [n]$ has vertex probability $p_i = \lambda \cdot \frac{1}{\theta} = \frac{\lambda}{n}$.
\end{proof}

\subsection{Tree Graphs}\label{sec:tree_graphs}
In this section we focus on the case where the graph is a tree. We first further refine the 
characterization of defense-optimal graphs for trees. Then, we utilise this characterisation to 
derive a polynomial-time algorithm that decides in polynomial time whether a given tree is defense-optimal, and if that is the case, it constructs in polynomial time a defense-optimal strategy for it. On the other hand, in the case where the tree is not defense-optimal, we show that it is \NP-hard to compute a best-defense strategy for it, namely it is \NP-hard to compute $p^*(G)$. We first provide Lemma~\ref{lem:tree_opt} which will be used in our polynomial-time algorithm for checking defense-optimality on trees. Henceforth, we write that a graph is {\em covered} by a defense strategy if every vertex of the graph is covered by a $\lambda$-subgraph that is in the support of the defense strategy.

\begin{lemma}\label{lem:tree_opt}
	A tree $T$ is defense-optimal if and only if $T$ can be decomposed into $\frac{n}{\lambda}$ disjoint $\lambda$-subgraphs.
\end{lemma}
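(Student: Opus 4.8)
The plan is to prove the two directions separately, the ``only if'' (necessity) direction being the substantial one. For sufficiency, suppose $T$ decomposes into $n/\lambda$ vertex-disjoint $\lambda$-subgraphs $H_1,\dots,H_{n/\lambda}$. I would let the defender play the uniform distribution over these, assigning probability $\lambda/n$ to each $H_t$. Since the $H_t$ partition $V$, every vertex lies in exactly one of them and is therefore covered with probability exactly $\lambda/n$, so $p^*(T)\ge\lambda/n$. As Theorem~\ref{thm:def-opt_ratio} gives $p^*(G)\le\lambda/n$ for every graph, we get $p^*(T)=\lambda/n$, and by Corollary~\ref{cor:def_opt} the tree is defense-optimal.

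For necessity I would start from Corollary~\ref{cor:def_opt}: if $T$ is defense-optimal there is a defense strategy $q$ under which every vertex has vertex probability exactly $\lambda/n$. Rescaling, set $y_H:=(n/\lambda)\,q_H$; then $y\ge 0$ and $\sum_{H\ni i}y_H=1$ for every vertex $i$, i.e.\ $y$ is a \emph{fractional exact cover} of $T$ by $\lambda$-subgraphs. The goal is to convert this fractional cover into an integral one, i.e.\ a genuine decomposition, which I would do by strong induction on $n$ (the case $n=\lambda$ being trivial, since then $T$ itself is the only $\lambda$-subgraph). Root $T$ arbitrarily, write $T_v$ for the subtree hanging below a vertex $v$ and $s(v):=|V(T_v)|$, and choose $v$ to be a \emph{deepest} vertex with $s(v)\ge\lambda$; by maximality every child $w$ of $v$ satisfies $s(w)<\lambda$.

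The crux is to show that $T_v$ is a ``clean'' $\lambda$-block, namely $s(v)=\lambda$, $y_{T_v}=1$, and no other subgraph in the support of $y$ meets $T_v$. First I would observe that any support subgraph $H$ with $V(H)\cap V(T_v)\neq\emptyset$ must contain $v$: being of size $\lambda$ it cannot fit inside a single child subtree (each has fewer than $\lambda$ vertices), so it must leave $T_v$ through the only available vertex, namely $v$. Hence $\{H:V(H)\cap V(T_v)\neq\emptyset\}=\{H:v\in V(H)\}$, and expressing the total (unit) coverage of the $s(v)$ vertices of $T_v$ gives $s(v)=\sum_{H\ni v}y_H\,|V(H)\cap V(T_v)|\le\lambda\sum_{H\ni v}y_H=\lambda$, where I use that $v$ itself is covered with total weight $1$. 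Combined with $s(v)\ge\lambda$ this forces $s(v)=\lambda$, and equality in the inequality forces $|V(H)\cap V(T_v)|=\lambda$, hence $H=T_v$, for every support $H$ containing $v$. Thus $T_v$ is covered with weight $1$ by itself alone and is disjoint from every other support subgraph. Since $T_v$ is attached to the rest of $T$ by the single edge above $v$, deleting it leaves a tree $T'$ on $n-\lambda$ vertices, and restricting $y$ to the (untouched) subgraphs supported inside $T'$ yields a fractional exact cover of $T'$. By the induction hypothesis $T'$ decomposes into $\frac{n-\lambda}{\lambda}$ disjoint $\lambda$-subgraphs, and adjoining $T_v$ completes the decomposition of $T$.

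I expect the clean-block step to be the main obstacle: the averaging inequality $s(v)\le\lambda$ together with its equality analysis is precisely what rules out an ``overfull'' vertex at which several small child-residuals would be forced into one piece, and it is what makes the bottom-up peeling well defined and repeatable. As by-products this argument also shows $\lambda\mid n$ for any defense-optimal tree, and it identifies the decomposition with the greedy that repeatedly peels off a deepest $\lambda$-vertex subtree, matching the polynomial-time optimality test announced for trees.
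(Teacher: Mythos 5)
Your proof is correct, but the necessity direction takes a genuinely different route from the paper's. The paper also starts from Corollary~\ref{cor:def_opt} but then argues locally and combinatorially: it looks at a leaf $u$ and its parent $v$, shows any support subgraph covering one must cover the other, and then either peels off a uniquely-covering $\lambda$-subgraph or, assuming two distinct support subgraphs $L_1,L_2$ overlap there, builds a chain $L_3,L_4,\dots$ of further support subgraphs whose eventual re-intersection would create a cycle in the tree --- a contradiction. Your argument instead rescales the optimal strategy into a fractional exact cover $y$ and inducts on a \emph{deepest} vertex $v$ whose subtree has at least $\lambda$ vertices: the observation that every support subgraph meeting $T_v$ must contain $v$, combined with the averaging identity $s(v)=\sum_{H\ni v}y_H\,\lvert V(H)\cap V(T_v)\rvert\le\lambda$, forces $s(v)=\lambda$ and $H=T_v$ for every such $H$, so a clean block peels off and induction finishes. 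Both proofs are bottom-up peelings, but yours replaces the paper's somewhat informal ``proceeding in the same way we eventually force a cycle'' step with a tight LP-style equality analysis, which is more rigorous; it also yields $\lambda\mid n$ and essentially re-derives the greedy test of Theorem~\ref{thm:trees-algo} as by-products. The paper's version, in exchange, stays purely combinatorial and makes the role of acyclicity (overlaps force cycles) more visible. The sufficiency direction is identical in both.
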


\begin{proof}
		$\pmb{(\Rightarrow)}$ Let $T$ be defense-optimal. We will show that the support of any best defense strategy on $T$ must comprise of pure strategies that are disjoint $\lambda$-subgraphs which altogether cover every $v\in V$. Since those are disjoint and cover $T$, it follows that their number is $\frac{n}{\lambda}$ in total.

		If $\lambda=1$ then the above trivially holds. Assume that $\lambda \geq 2$ and consider a best defense strategy on $T$ whose support comprises of a collection $\mathcal{L}$ of $\lambda$-subgraphs.
		
		Let $u \in V$ be a leaf of $T$ and let $v \in V$ be its parent. Any $\lambda$-subgraph  in $\mathcal{L}$ covering $u$ must also cover $v$, since $\lambda\geq 2$. Also, any $\lambda$-subgraph in $\mathcal{L}$ covering $v$ must also cover $u$, otherwise $p_v$ would be greater than $p_u$. Now, consider the neighbors of $v$. For those of them that are leaves, the same must hold as holds for $u$, namely $v$ and its leaf-children must all be covered by the same exact $\lambda$-subgraph(s).
		
		Consider the case where there is a leaf $u \in V$, such that a \emph{single} $\lambda$-subgraph contains $u$, its parent $v$, and all the other leaf-children of $v$ (and, possibly, other vertices connected to $v$). Then we can remove this $\lambda$-subgraph from $\mathcal{L}$ and the corresponding tree from $T$. This leaves the remainder of $T$ being a forest comprising of trees $T_1, \ldots, T_x$, each of which has a (best) defense strategy comprising of the corresponding subset of (the remainder of) $\mathcal{L}$ on $T_i$. Notice that it must be the case that every tree $T_i$, $i=1,2,\ldots,x$, has size at least $\lambda$ (otherwise the initial collection $\mathcal{L}$ would not have covered $T$). So, if there is always a leaf $u$ in some tree of the forest, such that a \emph{single} $\lambda$-subgraph contains $u$, its parent $v$, and all the other leaf-children of $v$ (and, possibly, other vertices connected to $v$), we can proceed in the same fashion for each of the $T_i$'s, always removing a $\lambda$-subgraph from $\mathcal{L}$, and the corresponding vertices from $T$, until we end up with an empty tree. This means that $\mathcal{L}$ was indeed a collection of disjoint $\lambda$-subgraphs covering $T$.
		
		However, assume for the sake of contradiction that at some ``iteration'' the assumption does not hold, namely assume that there is a tree in the forest with no leaf $u$, such that a single $\lambda$-subgraph contains $u$, its parent $v$, and all the other leaf-children of $v$ (and, possibly, other vertices connected to $v$). This means that there are (at least) two $\lambda$-subgraphs in $\mathcal{L}$, namely $L_1, L_2$, that cover $u$. Due to our initial observations, $u$, together with its parent $v$ and all of $v$'s leaf-children are contained in both $L_1$ and $L_2$. Since those are different $\lambda$-subgraphs, there is a vertex $z$ in the tree which belongs to $L_2$ but does not belong to $L_1$. Since $p_z = p_v$ (due to the fact that $\mathcal{L}$ is the support of the defense-optimal strategy and Corollary \ref{cor:def_opt}), it must hold that there is a different $\lambda$-subgraph, $L_3$, which covers $z$ but does not cover $v$ or any of its leaf-children. If $L_3$ also covers a vertex in $L_1 \setminus L_2$\footnote{We use $L_i \setminus L_j$ for some $\lambda$-subgraphs $L_i, L_j$ to denote the set of vertices which are contained in $L_i$ but not in $L_j$.}, then there is a cycle in the tree which is a contradiction. So $L_3$ must not cover vertices in $L_1 \setminus L_2$. Since $\L_3$ is different to $L_2$, there must be a vertex $z'$ in the tree which belongs in $L_3$ but not in $L_2$ (also not in $L_1$). Since $p_{z'} = p_{z}$ (due to the fact that $\mathcal{L}$ is the support of the defense-optimal strategy and Corollary \ref{cor:def_opt}), it must hold that there is a different $\lambda$-subgraph, $L_4$, which covers $z'$ but does not cover $z$ or any of the vertices in $L_2$. Similarly to before, if $L_4$ covers a vertex in $L_1 \setminus L_2$, then there is a cycle in the tree which is a contradiction. So $L_4$ must not cover vertices in $L_1$ or in $L_2$.
		
		Proceeding in the same way, we result in contradiction since the tree has finite number of vertices and there will need to be an overlap in coverage of some $L_j$ with some $L_i$, $j> i+1$, which would mean that there is a cycle in the tree.
		
		Therefore, there cannot be any overlaps between the $\lambda$-subgraphs of $\mathcal{L}$, meaning that $\mathcal{L}$ comprises of $\frac{n}{\lambda}$ disjoint $\lambda$-subgraphs which altogether cover $T$.
		
		$(\pmb{\Leftarrow})$ Let $\mathcal{L} = \{L_1, \ldots, L_{\frac{n}{\lambda}} \}$ be a collection of $\frac{n}{\lambda}$ disjoint $\lambda$-subgraphs that altogether cover $T$. Let the defender play each $L_i$, $i\in \{1,\ldots,\frac{n}{\lambda} \}$, equiprobably, that is, with probability $1/ \left( \frac{n}{\lambda} \right) = \frac{\lambda}{n}$. Then every vertex $v \in V$ is covered with probability $p_v = \frac{\lambda}{n} = p^*(G)$, meaning that $T$ is defense-optimal. 
\end{proof}

With Lemma~\ref{lem:tree_opt} in hand we can derive a polynomial-time algorithm that 
decides if a tree is defense-optimal, and if it is, to produce a best-defense strategy.
\begin{theorem}
\label{thm:trees-algo}
There exists a polynomial-time algorithm that decides whether a tree is defense-optimal and 
produces a best-defense strategy for it, or it outputs that the tree is not defense-optimal.
\end{theorem}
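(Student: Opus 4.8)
The plan is to reduce the entire problem, via Lemma~\ref{lem:tree_opt}, to deciding whether the tree $T$ admits a partition of its vertex set into $\frac{n}{\lambda}$ connected pieces each of size exactly $\lambda$, and to constructing such a partition when it exists. By Lemma~\ref{lem:tree_opt} the tree is defense-optimal precisely when such a decomposition exists, and by the $(\Leftarrow)$ direction of that lemma, once we have the $\frac{n}{\lambda}$ disjoint $\lambda$-subgraphs $L_1, \dots, L_{n/\lambda}$ in hand, the defense strategy that plays each $L_i$ with probability $\frac{\lambda}{n}$ is a best-defense strategy: it makes every vertex probability equal to $p^*(T) = \frac{\lambda}{n}$. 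So it suffices to give a polynomial-time algorithm for the decomposition problem.

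First I would dispose of the trivial cases: if $\lambda$ does not divide $n$, output that $T$ is not defense-optimal immediately; if $\lambda=1$, output the trivial decomposition into singletons. Otherwise, root $T$ at an arbitrary vertex $r$ and process the vertices in post-order (children before parents), maintaining at each vertex $v$ a residual $c(v)$ equal to the size of the partial, not-yet-completed piece containing $v$ within the subtree $T_v$. Concretely, set $c(v) = 1 + \sum_{u} c(u)$, summed over the children $u$ of $v$, where a child that has just completed a piece contributes $0$. If $c(v)=\lambda$, declare the current piece complete, record its vertex set as one of the $L_i$'s, and reset the residual of $v$ to $0$; if $c(v)>\lambda$, halt and output that $T$ is not defense-optimal; if $c(v)<\lambda$, pass $c(v)$ up to the parent. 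At the end, declare $T$ defense-optimal if and only if the traversal never overflowed and the root closed exactly, i.e. $c(r)=0$. This runs in $O(n)$ time, and by construction each recorded piece is a connected subtree of size exactly $\lambda$.

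The crux of the correctness argument is to show that this greedy is \emph{forced}, i.e. that the residuals $c(v)$ it computes coincide with the ``down-part'' sizes in any valid decomposition, so that the algorithm finds a decomposition exactly when one exists. The key observation is that in a tree, a child's connecting edge is cut if and only if that child's piece is already complete inside its own subtree: a partial piece of size $<\lambda$ in $T_u$ cannot be closed off within $T_u$ and must therefore extend upward through $u$'s parent edge, whereas a piece of size exactly $\lambda$ is complete and forces the parent edge to be cut. Consequently the down-part size at each vertex is uniquely determined by those of its children via $s_v = 1 + \sum_{u:\,s_u<\lambda} s_u$, and a straightforward induction on the subtree gives $s_v = c(v)$ for every $v$ in every valid decomposition. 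Hence if any valid decomposition exists, all residuals stay in $\{1,\dots,\lambda\}$ and the root closes, so the algorithm succeeds; conversely, whenever the algorithm succeeds it has produced $\frac{n}{\lambda}$ disjoint connected pieces of size $\lambda$, so by Lemma~\ref{lem:tree_opt} the tree is defense-optimal and the uniform strategy above is a best-defense strategy. I expect this ``residuals are forced'' step to be the main obstacle, since it is what rules out the possibility that some different sequence of cutting choices could succeed where the greedy fails; the remaining ingredients (connectivity of the pieces, the linear running time, and reading off the strategy) are immediate.
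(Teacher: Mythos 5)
Your proposal is correct and is essentially the same algorithm as the paper's: the paper's leaf-pointers with counters that merge at branch vertices are exactly your post-order residuals $c(v)$, with the same cut-at-$\lambda$ / fail-above-$\lambda$ rule and the same appeal to Lemma~\ref{lem:tree_opt} for both directions. Your ``forced residuals'' induction is a somewhat cleaner justification of the step the paper argues informally (that a deleted $\lambda$-subgraph is the unique way to cover its vertices in any disjoint decomposition), but it is the same underlying idea.
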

\begin{proof}
The algorithm works as follows. Initially, there is a pointer associated with a counter  
in every leaf of the tree $T$ that moves ``upwards'' towards an arbitrary root of the tree. For 
every move of the pointer the  corresponding counter increases by one. 
The pointer moves until one of the following happens: either the counter is equal to 
$\lambda$, or it reaches a vertex with degree greater of equal to 3 where it ``stalls''.
In the case where the counter is equal to $\lambda$, we create a $\lambda$-subgraph of
$T$, we delete this $\lambda$-subgraph from the tree, we move the pointer one position 
upwards, and we reset the counter back to zero.  
If a pointer stalls at a vertex of degree $d \geq 3$, it waits until all $d-1$ pointers reach 
this vertex. Then, all these pointers are merged to a single one and a new counter is created 
whose value is equal to the sum of the counters of all $d$ pointers.
If this sum is more than $\lambda$, then the algorithm returns that the graph is not 
defense-optimal.  If this sum is less than or equal to $\lambda$, then we proceed as if there 
was initially only this pointer with its counter; if the new counter is equal to $\lambda$, 
then we create a $\lambda$-subgraph of $T$ and reset the counter to 0; else the pointer 
moves upwards and the counter increases by one. To see why the algorithm requires 
polynomial time, observe that we need at most $n$ pointers and $n$ counters and in addition
every pointer moves at most $n$ times.

We now argue about the correctness of the algorithm described above. Clearly, if the algorithm 
does not output that the tree is not defense-optimal, it means that it partitioned $T$ into 
$\lambda$-subgraphs. So, from Lemma~\ref{lem:tree_opt} we get that $T$ is 
defense-optimal and the uniform probability distribution over the produced partition covers
every vertex with probability $\frac{\lambda}{n}$. It remains to argue that when the algorithm 
outputs that the graph is not defense-optimal, this is indeed the case. Consider the
case where we delete a $\lambda$-subgraph of the (remaining) tree.
Observe that the $\lambda$-subgraph our algorithm created deleted should be uniquely 
covered by this $\lambda$-subgraph in any best-defense strategy; any other 
$\lambda$-subgraph would overlap with some other $\lambda$-subgraph. 
Hence, the deletion of such a $\lambda$-subgraph was not a ``wrong''
move of our algorithm and the remaining tree is defense-optimal if and only if the tree 
before the deletion was defense-optimal. This means that any deletion that occurred 
by our algorithm did not make the remaining graph non defense-optimal.
So, consider the case where after a merge that occurred at vertex $v$ we get that the 
new counter is $c > \lambda$. Then, we can deduce that all the subtrees rooted at $v$ 
associated with the counters have strictly less than  $\lambda$ vertices. Hence, in order to 
cover all the $c > \lambda$ vertices using $\lambda$-subgraphs, at least two of these 
$\lambda$-subgraphs cover vertex $v$. 
Hence, the condition of Lemma~\ref{lem:tree_opt} is violated. But since every step of our algorithm so far was 
correct, it means that $v$ cannot be covered only by one $\lambda$-subgraph. Hence, our
algorithm correctly outputs that the tree is not defense-optimal.
\end{proof}


In Theorem~\ref{thm:trees-algo} we showed that it is easy to decide whether a tree is
defense-optimal and if this is the case, it is easy to find a best-defense strategy for it.
Now we prove that if a tree is not defense-optimal, then it is \NP-hard to find 
a best-defense strategy for it.
\begin{theorem}
	\label{thm:np-hard}
	Finding a best-defense strategy in \csd games is \NP-hard, even if the graph is 
	a tree.
\end{theorem}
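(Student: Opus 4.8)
The plan is to prove \NP-hardness by reduction from \trpartition, which is \emph{strongly} \NP-hard; since its integers may be taken in unary, a tree whose size is polynomial in $\sum_j a_j$ will have size polynomial in the input. Write the instance as integers $a_1,\dots,a_{3m}$ with $\tfrac{B}{4}<a_j<\tfrac{B}{2}$ and $\sum_j a_j=mB$, where a YES-instance is a partition of $[3m]$ into $m$ triples each summing to exactly $B$. By Lemma~\ref{lem:char_DR}, finding a best-defense strategy is equivalent to computing $p^*(T)$, so it suffices to show that computing $p^*(T)$ exactly is \NP-hard on trees.

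First I would build a spider (broom) $T$: a centre $c$ with one leg (path) of $a_j$ vertices attached for each $j$, and set $\lambda=B+1$. Because every leg has fewer than $\lambda$ vertices, any connected $\lambda$-subgraph must contain $c$, and any $\lambda$-subgraph reaching the tip of leg $j$ must contain all of leg $j$; hence the coverage of the tips couples the legs only through $c$, and a $\lambda$-subgraph that covers $c$ together with whole legs covers exactly three of them, by the $(\tfrac{B}{4},\tfrac{B}{2})$ bound -- precisely a candidate triple. Using equation~\eqref{eq:num_of_cov_vert2}, $\sum_i p_i=\lambda$ in every covering strategy, and since every subgraph hits $c$ one gets $p_c=1$ and $\min_{i\neq c}p_i\le\tfrac1m$, with equality only if every non-centre vertex is covered with probability exactly $\tfrac1m$; this target value is my threshold $\tau=\tfrac1m$. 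For $m\ge 2$ the tree is not defense-optimal, since $p_c=1\neq\lambda/n$ by Corollary~\ref{cor:def_opt}, so we are genuinely in the regime left open by Theorem~\ref{thm:trees-algo}.

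The easy direction is YES $\Rightarrow p^*(T)=\tau$: given a valid partition into triples, defend the $m$ $\lambda$-subgraphs ``$c$ together with the three legs of a triple'' each with probability $\tfrac1m$; every non-centre vertex then lies in exactly one such subgraph and is covered with probability $\tfrac1m$, whence $p^*(T)\ge\tau$, hence $p^*(T)=\tau$. The content of the reduction is the converse: I must show that for a NO-instance no defense strategy -- not even a fractional one -- attains $\min_i p_i=\tau$, i.e.\ $p^*(T)<\tau$.

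This converse is the main obstacle, and it is subtle precisely because $p^*$ is the value of a (huge) linear program rather than of an integer program. A strategy attaining $\tau$ need not use ``whole-triple'' subgraphs a priori, so one must first argue, using the monotonicity of vertex probabilities along each leg (a subgraph covering a vertex farther from $c$ also covers every nearer one), that equality $\tau$ along a whole leg forces every positively-played subgraph touching that leg to cover it entirely, thereby reducing any optimal strategy to a \emph{fractional} assignment of weights to whole-leg triples. The danger -- and the reason a bare spider is insufficient -- is that a fractional perfect matching into valid triples can exist even when no integral one does, so the plain construction need not separate YES from NO. To close this fractional--integral gap I would enrich the gadget (for instance, replacing short legs by long legs whose residues modulo $\lambda$ realise the $a_j$, together with rigid tip vertices whose probabilities in \emph{any} feasible strategy pin down the phase of the tiling) so that the induced leg-grouping is forced to be integral; equivalently, I would route the reduction through the auxiliary \merging{\lambda} problem, proving $\trpartition \le_p \merging{\lambda} \le_p$ computing $p^*$ on trees, with integrality enforced inside the Merging step. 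Establishing that NO-instances strictly lower the fractional optimum is where essentially all the work lies.
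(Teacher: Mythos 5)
Your construction is exactly the paper's: a star of paths (one path of $a_j$ vertices per integer, all hanging off a centre vertex), $\lambda=\tfrac{s}{m}+1$, and the threshold question $p^*(T)\ge\tfrac{1}{m}$. The YES-direction and the chain of deductions in the converse --- every $\lambda$-subgraph contains the centre, hence $p_{v_0}=1$; the identity $\sum_i p_i=\lambda$ forces $p_v=\tfrac1m$ for every $v\neq v_0$; monotonicity of coverage along a leg then forces every supported $\lambda$-subgraph to consist of the centre plus whole legs whose lengths sum to $\tfrac{s}{m}$ --- all match the paper's proof step for step. The divergence is at the very end: the paper stops there and concludes directly that the supported subgraphs correspond to subsets of the integers summing to $\tfrac{s}{m}$, reading off the \trpartition solution, whereas you stop short and explicitly defer the NO-direction to an unspecified ``enriched gadget'' or a detour through a \merging{\lambda} problem that you never define or analyse.

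That deferral is a genuine gap. As submitted, your argument establishes only that a strategy attaining $\tfrac1m$ induces a \emph{fractional} exact cover of the legs by valid triples; you neither prove that such a fractional cover can be converted into an integral partition (which would mean the bare spider suffices, as the paper asserts), nor exhibit the modified construction that would force integrality. Your observation that the fractional relaxation is the delicate point is a fair criticism of the write-up --- the paper's own closing sentence extracts ``a subgraph corresponding to a subset summing to $\tfrac{s}{m}$'' without addressing how the whole support assembles into a partition --- but flagging the difficulty is not the same as resolving it, and by your own admission this is ``where essentially all the work lies.'' To have a proof you must commit to one of the two routes: either justify the fractional-to-integral step for these value-defined triple systems (so the plain construction separates YES from NO), or actually build and verify the enriched gadget. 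Until one of these is carried out, the reduction is not complete.
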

\begin{proof}
	We will prove the theorem by reducing from \trpartition. In an instance of \trpartition we
	are given a multiset with $n$ positive integers $a_1, a_2, \ldots, a_n$ where $n=3m$ for some $m \in \mathbb{N}_{>0}$
	and we ask whether it can be partitioned into $m$ triplets $S_1, S_2, \ldots, S_m$ such 
	that the sum of the numbers in each subset is equal. Let $s = \sum_{i=1}^n a_i$. Observe
	then that the problem is equivalent to asking whether there is a partition of the integers to 
	$m$ triplets such that the numbers in every triplet sum up to $\frac{s}{m}$.
	Without loss of generality we can assume that $a_i < \frac{s}{m}$ for every $i \in [n]$; 
	if this was not the case, the problem could be trivially answered.
	So, given an instance of \trpartition, we create a tree $G=(V,E)$ with $s+1$
	vertices and $\lambda = \frac{s}{m}+1$. 
	The tree is created as follows. For every integer $a_i$, we create a path with $a_i$
	vertices. In addition, we create the vertex $v_0$ and connect it to one of the two 
	ends of each path. We will ask whether $p^*(G) \geq \frac{1}{m}$.
	
	Firstly, assume that the given instance of \trpartition is satisfiable. Then, given $S_j$ we create
	a $(\frac{s}{m}+1)$-subgraph of $G$ as follows. If $a_i \in S_j$, then we add the 
	corresponding path of $G$ to the subgraph. Finally, we add vertex $v_0$ in our $(\frac{s}{m}+1)$-subgraph and the resulting subgraph is connected (by the construction of $G$).
	Since the sum of $a_i$'s equals $\frac{s}{m}$, the constructed subgraph has 
	$\frac{s}{m}+1$	vertices.
	If we assign probability $\frac{1}{m}$ to every $(\frac{s}{m}+1)$-subgraph we get that 
	$p_v \geq \frac{1}{m}$ for every $v \in V$.
	
	To prove the other direction, assume that $p^*(G) \geq \frac{1}{m}$ and observe 
	the following. 
	Firstly, since as we argued it is $a_i < \frac{s}{m}$ for every $i \in [n]$, it holds 
	that every $(\frac{s}{m}+1)$-subgraph of $G$ contains vertex $v_0$. 
	Thus, $p_{v_0}=1$ and $\sum_{v \neq v_0} p_v \geq \frac{s}{m}$, since there are $s$
	vertices other than $v_0$ and for each one of them holds that $p_v \geq \frac{1}{m}$.
	In addition, observe that $\sum_{v \in V} p_v = \lambda = \frac{s}{m}+1$. 
	Hence, we get that $p_v = p^*(G) = \frac{1}{m}$ for every vertex $v \neq v_0$. 
	In addition, observe that every pure defense strategy that covers a leaf of this tree, 
	covers all the vertices of the branch. Hence, for every branch of the tree, all its
	vertices are covered by the same set of pure strategies; if a vertex $u$ that is closer
	to $v_0$ is covered by one strategy that does not cover the whole branch, then the 
	leaf $u'$ of the branch is covered with probability less than $u$. 
	So, in order for $p_v = p^*(G) = \frac{1}{m}$ for every $v \neq v_0$, it means that there 
	exist a $(\frac{s}{m}+1)$-subgraph that {\em exactly} covers a subset of the paths;
	this means that	if a $(\frac{s}{m}+1)$-subgraph covers a vertex in a path, then 
	it covers every vertex of the path. 
	Hence, by the construction of the graph, we get that this $(\frac{s}{m}+1)$-subgraph 
	of $G$ corresponds to a subset of integers in the \trpartition instance that sum up to 
	$\frac{s}{m}$. Since, \trpartition is \NP-hard, we get that finding a best defense strategy is 
	\NP-hard.
\end{proof}

\subsection{General Graphs}\label{sec:general_graphs}

We conjecture that contrary to checking defense-optimality of tree graphs and constructing a corresponding defense-optimal strategy in polynomial time, it is \NP-hard to even decide whether a given (general) graph is defense-optimal.

\begin{conjecture}
	\label{thm:np-general}
	It is \NP-hard to decide whether a graph is defense-optimal.
\end{conjecture}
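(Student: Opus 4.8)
The plan is first to recast defense-optimality as a purely combinatorial feasibility problem, and then to reduce a known \NP-hard covering problem to it. By Corollary~\ref{cor:def_opt}, $G$ is defense-optimal exactly when some defense strategy $q \in \Delta_{\theta}$ gives every vertex probability $\frac{\lambda}{n}$. Because the expected number of covered vertices is always $\lambda$ (equation~\eqref{eq:num_of_cov_vert1}), we have $\sum_{i \in [n]} p_i = \lambda$ for \emph{any} defense strategy, so demanding $p_i \geq \frac{\lambda}{n}$ for all $i$ already forces equality. Rescaling each $q_j$ by $\frac{n}{\lambda}$, defense-optimality becomes the existence of nonnegative weights $y_j$ on the connected $\lambda$-subgraphs with
\[
\sum_{j \in [\theta]} y_j = \frac{n}{\lambda}, \qquad \sum_{\substack{j \in [\theta] \\ i \in j}} y_j = 1 \ \text{ for all } i \in [n], \qquad y_j \geq 0,
\]
that is, the existence of a \emph{fractional exact cover} of $V$ by connected $\lambda$-subgraphs. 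Solving this LP by the ellipsoid or column-generation method needs a pricing step that is exactly a maximum-weight connected $\lambda$-subgraph computation, which is \NP-hard; this hints at intractability but, since hardness of separation does not imply hardness of feasibility, a genuine reduction is required.

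For the reduction I would start from \textsc{Exact Cover by 3-Sets} (\textsc{X3C}): given a ground set $U$ with $|U| = 3m$ and $3$-element subsets $C_1,\dots,C_t \subseteq U$, decide whether some subfamily partitions $U$. Borrowing the path-and-hub idea from the construction in Theorem~\ref{thm:np-hard}, I would represent each element of $U$ by a vertex and each set $C_\ell$ by a gadget, attaching pendant paths of a fixed length so that every ``set-subgraph'' has exactly $\lambda$ vertices and is connected, and choosing $\lambda$ and $n$ so that $\frac{n}{\lambda}$ equals the number of sets in a partition. In the \emph{yes}-direction this is immediate: given an exact cover, play its $\frac{n}{\lambda}$ set-subgraphs uniformly, so every vertex is covered with probability $\frac{\lambda}{n}$ and $G$ is defense-optimal by Corollary~\ref{cor:def_opt}.

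The real difficulty, and the reason the statement is posed only as a conjecture, is the converse: I must guarantee that a \emph{no}-instance of \textsc{X3C} yields a graph with \emph{no} fractional exact cover, not merely no integral one. Figure~\ref{graph1} shows this is a real danger --- the $K_4$-block there is defense-optimal via a fractional cover although it admits no integral partition into $\lambda$-subgraphs. I would try to suppress such fractional cheating by a rigidity argument in the spirit of the leaf analysis of Lemma~\ref{lem:tree_opt}: design the gadgets so that the constraints $p_v = \frac{\lambda}{n}$ force the element-vertices of each gadget to be covered as a single block, reducing any feasible $y$ to a fractional exact cover of the \emph{set system} $(U,\{C_\ell\})$, and then arrange the vertex-by-$\lambda$-subgraph incidence structure to be of a type (for instance laminar or interval) whose fractional exact-cover polytope is integral, so that fractional feasibility coincides with an integral exact cover. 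The hard part is doing both at once --- enforcing block coverage while simultaneously ruling out \emph{all} ``hybrid'' connected $\lambda$-subgraphs that straddle several gadgets through shared hub vertices and would otherwise reintroduce the spurious fractional solutions of the Figure~\ref{graph1} type. Controlling this full family of connected $\lambda$-subgraphs is where I expect the argument to be most delicate.
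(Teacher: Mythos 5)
The statement you are asked to prove is stated in the paper only as a conjecture: the authors give no proof, so there is no argument of theirs to compare against. Your proposal, by your own admission, is also not a proof. The first part is sound and worth keeping: the recasting of defense-optimality, via Corollary~\ref{cor:def_opt} and the identity $\sum_{i\in[n]}p_i=\lambda$, as the feasibility of a \emph{fractional exact cover} of $V$ by connected $\lambda$-subgraphs is correct, and it isolates precisely the right combinatorial object. The \emph{yes}-direction of a reduction from \textsc{X3C} is indeed routine once gadgets of size exactly $\lambda$ are in place.

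The genuine gap is the one you name yourself: the \emph{no}-direction. A reduction must show that when the \textsc{X3C} instance has no exact cover, the constructed graph admits no \emph{fractional} exact cover by connected $\lambda$-subgraphs, and Figure~\ref{graph1} demonstrates that fractional feasibility can strictly exceed integral feasibility. Your two proposed remedies pull against each other: if you force the incidence structure of vertices versus connected $\lambda$-subgraphs to be laminar or interval so that the fractional exact-cover polytope is integral, you have made the structure so rigid that feasibility is likely decidable in polynomial time (this is essentially what happens for trees in Lemma~\ref{lem:tree_opt} and Theorem~\ref{thm:trees-algo}); if you keep it rich enough to encode \textsc{X3C}, you must separately rule out every ``hybrid'' connected $\lambda$-subgraph straddling several gadgets, and no construction doing so is exhibited. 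Note also that Theorem~\ref{thm:np-hard}, which you hope to borrow from, proves hardness of \emph{computing} $p^*(G)$ on non-defense-optimal trees; it does not address the threshold question of whether $p^*(G)=\lambda/n$, which is what defense-optimality asks. As it stands your text is a reasonable research plan for attacking the conjecture, but it does not establish it.
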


\section{Approximation algorithm for $p^*(G)$}
We showed in the previous section that, given a graph $G$, it is \NP-hard to find the best-defense strategy, or equivalently, to compute $p^*(G)$. We also presented in Theorem \ref{thm:LP_for_p^*} an algorithm for computing the exact value $p^*(G)$ of a given graph $G$ (and therefore its best defense ratio), but this algorithm has running time polynomial in the size of the input only in the cases $\lambda = c$ or $\lambda = n - c$, where $c$ is a constant natural.
On the positive side, we present now a polynomial-time algorithm which, given a graph $G$ of $n$ vertices, returns a defense strategy with defense ratio which is within factor $2+\frac{\lambda - 3}{n}$ of the best defense ratio for $G$. In particular, it achieves defense ratio $1 / p' \leq \left( 2+\frac{\lambda - 3}{n} \right) / p^*(G) $, where $p' = \min_{i \in [n]} p_i$ and every $p_{i}$, $i\in [n]$ is the vertex probability determined by the constructed defense strategy.
We henceforth write that a collection $\mathcal{L}$ of $\lambda$-subgraphs covers a graph $G=(V,E)$, if every vertex of $V$ is covered by some $\lambda$-subgraph in $\mathcal{L}$. 
The algorithm presented in this section returns a collection $\mathcal{L}$ of at most $\frac{2n-3}{\lambda}+1$  $\lambda$-subgraphs that covers $G$. Therefore, the uniform defense strategy over $\mathcal{L}$ assigns probability at least $1 / \left( \frac{2n-3}{\lambda}+1 \right)$ to each $\lambda$-subgraph.

For any collection $\mathcal{L}$ of $\lambda$-subgraphs and for any $v \in V$, let us denote by $\coverage_\mathcal{L}(v)$ the number of $\lambda$-subgraphs in $\mathcal{L}$ which $v$ belongs in. Observe that:
\begin{align}\label{eq:coverage}
\sum_{v \in V} \coverage_\mathcal{L}(v) = |\mathcal{L}|\cdot \lambda,
\end{align}
where $|\mathcal{L}|$  denotes the cardinality of $\mathcal{L}$. 

We first prove Lemma~\ref{lem:coverage}, to be used in the proof of the main theorem of this Section.
We henceforth denote by $V(G)$ and $E(G)$ the vertex set and edge set, respectively, of some graph $G$. 

\begin{lemma}\label{lem:coverage}
	For any tree $T$ of $n$ vertices, and for any $\lambda \leq n$, we can find a collection $\mathcal{L}$ of distinct $\lambda$-subgraphs such that for every $v \in V$, it holds that $1 \leq \coverage_\mathcal{L} (v) \leq \text{degree}(v)$, except maybe for (at most) $\lambda-1$ vertices, where for each  of them it holds that $\coverage_\mathcal{L}(v) = \text{degree}(v)+ 1$.
\end{lemma}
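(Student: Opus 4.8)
The plan is to give a constructive proof by windowing an Euler tour of the tree. First I would root $T$ at an arbitrary vertex $r$ and fix a depth-first Euler tour $W=(w_1,\dots,w_m)$ with $m=2n-1$, recalling the standard facts that each non-root vertex $v$ occurs exactly $\text{degree}(v)$ times in $W$, the root $r$ occurs $\text{degree}(r)+1$ times, and that the set of distinct vertices appearing in any contiguous block $[a,b]$ of $W$ induces a connected subtree (a walk stays connected). Since $w_1=w_m=r$, the root is always covered, and every vertex occurs somewhere in $W$.

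The construction sweeps $W$ from left to right, cutting off a block as soon as it has accumulated exactly $\lambda$ distinct vertices. This yields windows $L_1,\dots,L_{t-1}$, each a genuine $\lambda$-subgraph, whose position-intervals partition $[1,b_{t-1}]$, together with a leftover tail $[b_{t-1}+1,m]$ containing $r'<\lambda$ distinct vertices. If every vertex of the tail already appears in $[1,b_{t-1}]$ (and is thus already covered), I discard the tail; otherwise I form a final $\lambda$-subgraph $L_t=[c,m]$ by extending the tail leftward until it contains exactly $\lambda$ distinct vertices. If two windows happen to induce the same vertex set, I discard one of the duplicates, which only lowers coverages; this guarantees that the members of $\mathcal{L}$ are distinct.

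The heart of the argument is the coverage accounting. Because $L_1,\dots,L_{t-1}$ partition $[1,b_{t-1}]$ into disjoint intervals, each occurrence of a vertex $v$ in that range lies in a unique $L_i$, so the number of these windows containing $v$ is at most the number of occurrences of $v$ in $[1,b_{t-1}]$; adding at most one more for membership in $L_t$ gives $\coverage_\mathcal{L}(v)\le \text{degree}(v)$ for every vertex that also has an occurrence in the tail — the extra $+1$ is absorbed by the missing tail occurrence — and in particular for the root. The only vertices that can reach $\text{degree}(v)+1$ are those introduced by the leftward extension of $L_t$, namely the $\lambda-r'$ vertices of $L_t$ with no occurrence in the tail. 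Coverage is at least $1$ everywhere, since each vertex occurs in $W$ and hence sits in some retained window (tail vertices being either covered earlier or placed in $L_t$).

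To finish, I would bound the number of exceptional vertices by $\lambda-1$. The key point is that the extension step is performed only when the tail contains a genuinely new vertex; since the tail always contains the already-covered root (whenever $t\ge 2$), in that case the tail has at least the root plus one new vertex, so $r'\ge 2$, whence the number of exceptions is $\lambda-r'\le \lambda-2<\lambda-1$. In the discard case, and in the degenerate case $\lambda=n$ (where a single window covers $T$ and every coverage equals $1$), there are no exceptions at all. The main obstacle is exactly this final accounting: one must treat the root's extra occurrence and the tail/extension boundary carefully, and it is the discard rule that prevents the root from becoming an additional exception and pushing the count past the budget of $\lambda-1$.
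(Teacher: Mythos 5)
Your construction is essentially the paper's in a different guise: the paper runs an explicit DFS that greedily packs visited vertices into $\lambda$-subgraphs and completes the last, short one by a fresh DFS from inside it, and its key count (each vertex is visited at most $\text{degree}(v)$ times, plus one for the completion step) is exactly your Euler-tour occurrence count. The windowing formulation is clean and the collection it produces does satisfy the lemma. However, your accounting of the exceptional vertices contains a concrete error: the root occurs $\text{degree}(r)+1$ times in $W$, so your own arithmetic gives $\coverage_{\mathcal{L}}(r)\le(\text{degree}(r)+1-1)+1=\text{degree}(r)+1$, not $\text{degree}(r)$; the parenthetical ``and in particular for the root'' does not follow. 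The root genuinely can be an exception, including in your discard case. Take $T$ with root $r$, children $a,b$, a leaf $a'$ under $a$ and a leaf $b'$ under $b$, and $\lambda=3$: the tour is $r,a,a',a,r,b,b',b,r$, the sweep produces the three distinct windows $\{r,a,a'\}$, $\{a,r,b\}$, $\{b',b,r\}$ with an empty tail (hence the discard case), and $\coverage_{\mathcal{L}}(r)=3=\text{degree}(r)+1$. This refutes both your claim that the discard case yields no exceptions and the claim that the discard rule prevents the root from becoming an additional exception.

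Fortunately the slip is absorbable, so the construction still meets the budget once you count correctly. Only the root can exceed your tally, since every non-root vertex has just $\text{degree}(v)$ occurrences in $W$. In the discard case you therefore have at most one exception, which is at most $\lambda-1$ for $\lambda\ge 2$ (and $\lambda=1$ trivially has none, as each deduplicated singleton window covers its vertex once). In the extension case your bound of $\lambda-r'\le\lambda-2$ exceptions among the vertices of $L_t$ with no tail occurrence leaves exactly one unit of slack for the root, giving at most $\lambda-1$ in total. So the lemma survives, but the root must be added to the count rather than argued away. Two smaller points: your exceptional vertices attain coverage at most $\text{degree}(v)+1$ rather than exactly that value (which is all that is needed for Lemma~\ref{lem:num_of_subgraphs}), and you should state explicitly that deduplication only decreases coverages and cannot drop any of them below $1$, since a discarded window is vertex-identical to a retained one.
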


\begin{proof}
	We will prove the statement of the lemma by providing Algorithm \ref{alg:main} that takes as input $T$ and $\lambda$ and outputs the requested collection $\mathcal{L}$ of $\lambda$-subgraphs.

	
	\begin{algorithm}[!h]
		\caption{\textsc{Main Algorithm}}\label{alg:main}
		\begin{algorithmic}[1]
			\REQUIRE{A tree graph $T=(V,E)$ of $n$ vertices, and a natural $\lambda \leq n$.}
			\ENSURE{A collection $\mathcal{L}$ of distinct $\lambda$-subgraphs that satisfies the statement of Lemma \ref{lem:coverage}.}
			
			\medskip
			
			\STATE{$i$, \quad global variable.} \COMMENT{\quad \% The index of the $\lambda$-subgraph $L_i$.}
			\STATE{$count$, \quad global variable.} \COMMENT{\quad \% Is 0 until the whole tree is covered, then it becomes 1 to allow for the last $\lambda$-subgraph to be completed, if it is not already.}
			\STATE{$S$, \quad global variable.} \COMMENT{\quad \% The set of vertices already covered by the algorithm.}
			\STATE{$vertex$, \quad global variable.} \COMMENT{\quad \% The vertex considered to be inserted in a $\lambda$-subgraph. }
			
			~\\
			
			\medskip
			
			\STATE{$S \gets \emptyset$}
			\STATE{$i \gets 1$}
			\STATE{$L_i \gets \emptyset$}
			\STATE{Pick an arbitrary vertex $v$ of $T$ and consider it the root.}
			\STATE{$vertex \gets v$}
			\STATE{$count \gets 0$}
			
			\medskip
			
			\WHILE{$count < 2$}
			
				\WHILE{$S \neq V$}\label{ln:check_if_covered-start}

					\WHILE [\quad \% The while-loop to ensure that the first element of $L_i$ is uncovered.] {$vertex \in S$}\label{ln:find_unc_first_node}
						\IF{$vertex$ has a child $u \notin S$}\label{ln:DFS_1_start}
							\STATE{$vertex \gets u$}\label{ln:if_uncov_child_1}
						\ELSE\label{ln:if_cov_child_1}
							\STATE{$vertex \gets$ parent of $vertex$}\label{ln:DFS_1_end}
						\ENDIF\label{ln:search_for_unc_ver-end}
					\ENDWHILE
					
					\WHILE [ \quad \% The while-loop that fills in the current $\lambda$-subgraph $L_i$.] {$|L_i| < \lambda$}\label{ln:fill_in-start}
						\STATE{$L_i \gets L_i \cup \{vertex\}$}\label{ln:insert_to_subgraph}
						\STATE{$S \gets S \cup \{vertex\}$}\label{ln:insert_to_cov_set}
						\IF{$vertex$ has a child $u \notin S$}\label{ln:DFS_2_start}
							\STATE{$vertex \gets u$}\label{ln:if_uncov_child_2}
						\ELSE\label{ln:if_cov_child_2}
							\STATE{$vertex \gets$ parent of $vertex$}\label{ln:DFS_2_end}
						\ENDIF
					\ENDWHILE\label{ln:fill_in-end}
						\IF{$count < 1$}
							\STATE{$i \gets i+1$}\label{ln:new_subgraph}
							\STATE{$L_i \gets \emptyset$}\label{ln:initialize_subgraph}
						\ELSE\label{ln:no_new_subgraph_1}
							\BREAK\label{ln:no_new_subgraph_2}
						\ENDIF
					\ENDWHILE\label{ln:check_if_covered-end}
				
				\STATE{$S \gets \emptyset$}\label{ln:last_DFS_1}
				\STATE{$i \gets i-1$}
				\STATE{Pick an arbitrary vertex $v \in L_i$ and consider it the root.}
				\STATE{$vertex \gets v$}\label{ln:last_DFS_3}
				\STATE{$count \gets count + 1$}\label{ln:last_DFS_4}
				%
				
			\ENDWHILE
			
		\end{algorithmic}
	\end{algorithm}

	The algorithm starts by picking an arbitrary vertex $v$ to serve as the root of the tree. Then it performs a Depth-First-Search (DFS) starting from $v$. We will distinguish between {\em visiting} a vertex and {\em covering} a vertex in the following way. We say that DFS visited a vertex if it considered that vertex as a candidate to be inserted to some $\lambda$-subgraph, and we say that DFS covered a vertex if it visited {\em and} inserted the vertex at some $\lambda$-subgraph. By definition, DFS visits in a greedy manner first an uncovered child, and only if there is no such child, it visits its parent (lines \ref{ln:DFS_1_start}-\ref{ln:DFS_1_end}, \ref{ln:DFS_2_start}-\ref{ln:DFS_2_end}). The set-variable that keeps track of the covered vertices is $S$.
	
	Starting with the root of $T$, the algorithm simply visits the whole vertex set according to DFS, putting each visited vertex in the same $\lambda$-subgraph $L_i$ (starting with $i=1$) (lines \ref{ln:fill_in-start}-\ref{ln:fill_in-end}), and when $|L_i| = \lambda$, a new empty $\lambda$-subgraph $L_{i+1}$ is picked to get filled in with $\lambda$ vertices (lines \ref{ln:new_subgraph}-\ref{ln:initialize_subgraph}) taking care of one extra thing:
	The first vertex that the algorithm puts in an empty $\lambda$-subgraph $L_i$, $i \in \{1,2, \dots\}$ is guaranteed to be one that has not been covered by any other $\lambda$-subgraph so far (lines \ref{ln:find_unc_first_node}-\ref{ln:search_for_unc_ver-end}). This ensures that no two $\lambda$-subgraphs will eventually be identical.
	
	The algorithm will not only visit all vertices in $T$, but also cover them. That is because there is no point where the algorithm  checks whether the currently visited vertex is uncovered and then does not cover it. On the contrary, it covers every vertex that it visits, except for some already covered one in case the current $\lambda$-subgraph is empty (lines \ref{ln:find_unc_first_node}-\ref{ln:DFS_2_end}). And since DFS by construction visits every vertex, we know that at some point the whole vertex set will be covered, or equivalently, $\coverage_{\mathcal{L}}(v) \geq 1, \forall v \in V$. Therefore, the algorithm will eventually exit the while-loop in lines \ref{ln:check_if_covered-start}-\ref{ln:check_if_covered-end}.
	
	Now we prove that, after the algorithm terminates, every vertex $v \in V$ is covered at most $degree(v)$ times, except for at most $\lambda - 1$ vertices that are covered $degree(v)+1$ times. Observe that DFS visits every vertex $v$ at most $degree(v)$ times: (a) $v$ will be visited after its parent $u$ only if $v$ is uncovered (lines \ref{ln:DFS_1_start}-\ref{ln:if_uncov_child_1}, \ref{ln:DFS_2_start}-\ref{ln:if_uncov_child_2}), $v$ will get covered (lines \ref{ln:insert_to_subgraph}-\ref{ln:insert_to_cov_set}), and will not get visited ever again by its parent since it will be covered (lines \ref{ln:if_cov_child_1}-\ref{ln:DFS_1_end}, \ref{ln:if_cov_child_2}-\ref{ln:DFS_2_end}). (b) $v$ will be visited at most once by each of its children, say $w$, only if $w$ does not have an uncovered child (lines \ref{ln:if_cov_child_1}-\ref{ln:DFS_1_end}, \ref{ln:if_cov_child_2}-\ref{ln:DFS_2_end}), and $v$ will not get ever visited by its parent since $v$ will be covered, and also $v$ cannot be visited a second time by any of its children, since they can never be visited again (they can only be visited through $v$ since $T$ is a tree). Therefore, any vertex $v$ will be visited exactly once after its parent is visited, and at most once by each of its children, having a total of at most $degree(v)$ visits. And since, as argued above, the total number of visits of a vertex is at most the number of times it will be covered, when DFS terminates, that is $S=V$, it will be $\coverage_{\mathcal{L}}(v) \leq degree(v)$, for every $v \in V$. 
	
	However, note that the last nonempty $\lambda$-subgraph $L_i$ might not consist of $\lambda$ vertices since the entire $V$ was covered and DFS could not proceed further. In this case, the algorithm empties the set $S$ that keeps track of the covered nodes, takes the current $L_i$ and fills it in with exactly another $\lambda - |L_i|$ vertices. This is done by picking an arbitrary vertex from $L_i$ and setting it as the root of $T$, and performing one last DFS starting from it until $L_i$ has $\lambda$ vertices in total (lines \ref{ln:last_DFS_1}-\ref{ln:last_DFS_3}). To ensure that the DFS will continue only until it fills in this current $L_i$, the algorithm counts the number of times that it runs the while-loop of DFS, namely lines \ref{ln:check_if_covered-start}-\ref{ln:check_if_covered-end}, via the variable $count$ (line \ref{ln:last_DFS_4}), which escapes the while-loop of DFS in case DFS has filled in $L_i$ (lines \ref{ln:no_new_subgraph_1}-\ref{ln:no_new_subgraph_2}) and terminates. Observe that in the last $\lambda$-subgraph $L_i$, a vertex $v$ inserted in the last iteration of DFS ($count=1$) and was not inserted in $L_i$ by the first run ($count=0$) might have been covered by the first run of DFS exactly $degree(v)$ times, therefore when the algorithm terminates it has been covered $degree(v)+1$ times. Since by the end of the first DFS run $L_i$ had at least one vertex, the cardinality of such vertices that are covered more times than their degree are at most $\lambda - 1$.	
\end{proof}

We can now prove the following.
\begin{lemma}\label{lem:num_of_subgraphs}
	For any graph $G$ of $n$ vertices, and for any $\lambda \leq n$, there exist (at most) $\frac{2n - 3}{\lambda} +1$ $\lambda$-subgaphs of $G$ that cover $G$.
\end{lemma}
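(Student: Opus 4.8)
The plan is to reduce the statement for a general (connected) graph $G$ to the tree case already established in Lemma~\ref{lem:coverage}. First I would fix a spanning tree $T$ of $G$; such a tree exists because the very definition of a $\lambda$-subgraph presupposes that $G$ is connected (so that connected induced subgraphs of size $\lambda$ exist at all). Applying Lemma~\ref{lem:coverage} to $T$ with the same parameter $\lambda$ yields a collection $\mathcal{L}$ of distinct $\lambda$-subgraphs of $T$ satisfying $1 \leq \coverage_{\mathcal{L}}(v) \leq \deg_T(v)$ for all but at most $\lambda-1$ vertices, with the exceptional ones satisfying $\coverage_{\mathcal{L}}(v) = \deg_T(v)+1$.

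The key observation I would then make is that every member of $\mathcal{L}$ is also a legitimate $\lambda$-subgraph of $G$. Indeed, a $\lambda$-subgraph of $T$ is a set of $\lambda$ vertices inducing a connected subtree of $T$; since $T \subseteq G$, the subgraph of $G$ induced on the same vertex set contains that subtree and is therefore connected. Hence $\mathcal{L}$ is a collection of $\lambda$-subgraphs of $G$, and because $\coverage_{\mathcal{L}}(v) \geq 1$ for every $v$, it covers $G$.

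It remains to bound $|\mathcal{L}|$, which I would do by double counting through \eqref{eq:coverage}. Summing the coverage bounds over all vertices gives $\sum_{v \in V} \coverage_{\mathcal{L}}(v) \leq \sum_{v \in V} \deg_T(v) + (\lambda-1)$, and since $T$ is a tree on $n$ vertices, $\sum_{v \in V} \deg_T(v) = 2(n-1)$. Combining this with the identity $\sum_{v \in V} \coverage_{\mathcal{L}}(v) = |\mathcal{L}| \cdot \lambda$ from \eqref{eq:coverage} yields $|\mathcal{L}| \cdot \lambda \leq 2(n-1) + (\lambda-1) = 2n-3+\lambda$, whence $|\mathcal{L}| \leq \frac{2n-3}{\lambda}+1$, exactly as claimed.

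The arithmetic here is routine; the only points genuinely requiring care are the passage from $T$ to $G$—verifying that subtrees of a spanning tree remain connected induced subgraphs of $G$—and the tacit connectivity assumption on $G$, without which a single spanning tree (and indeed any covering by connected $\lambda$-subgraphs) would be meaningless and one would have to argue component by component. I do not anticipate any serious obstacle beyond stating these modelling details cleanly, since the heavy lifting of constructing a collection with small per-vertex coverage was already carried out in Lemma~\ref{lem:coverage}.
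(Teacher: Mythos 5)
Your proposal is correct and follows essentially the same route as the paper: take a spanning tree $T$ of $G$, apply Lemma~\ref{lem:coverage} to it, observe that the resulting $\lambda$-subgraphs remain valid (connected) $\lambda$-subgraphs of $G$, and bound $|\mathcal{L}|$ via the double-counting identity \eqref{eq:coverage} together with $\sum_v \deg_T(v)=2(n-1)$. Your explicit remarks on why subtrees of $T$ induce connected subgraphs of $G$ and on the tacit connectivity assumption are sound points that the paper leaves implicit, but they do not change the argument.
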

%
\begin{proof}
	Consider a spanning tree $T$ of $G$. Then Lemma \ref{lem:coverage} applies to $T$. Observe that a collection $\mathcal{L}$ as described in the statement of the aforementioned lemma has the same qualities for $G$ since $V(T) = V(G)$ and $E(T) \subseteq E(G)$. That is, $\mathcal{L}$ is a collection of distinct $\lambda$-subgraphs of $G$, such that for every $v \in V$, it holds that $1 \leq \coverage_\mathcal{L} (v) \leq \text{degree}(v)$, except maybe for (at most) $\lambda-1$ vertices, for each $v$ of which it is $\coverage_\mathcal{L}(v) = \text{degree}(v)+ 1$.
	
	Fix a particular value for $\lambda$ and consider a collection $\mathcal{L}$ of $\lambda$-subgraphs as constructed in the proof of Lemma~\ref{lem:coverage}. Then, by equation \eqref{eq:coverage},
	\[|\mathcal{L}|  = \frac{\sum_{v \in V} \coverage_\mathcal{L} (v) }{\lambda}   \leq \frac{\sum_{v \in V}  \text{degree}(v)  +   (\lambda-1) }{\lambda}   =   \frac{2(n-1)}{\lambda} + \frac{\lambda-1}{\lambda}  \leq \frac{2n - 3}{\lambda} +1. \] 
\end{proof}

We conclude with the simple algorithm that achieves a defense strategy with
defense ratio which is within factor $2+\frac{\lambda - 3}{n}$ of the best defense ratio for G.

\begin{algorithm}[!h]
	\caption{\textsc{Approximating the best defense ratio}}  
	\label{alg:approximation_of_p*}  
	\begin{algorithmic}[1]
		\REQUIRE{Graph $G=(V,E)$ of $n$ vertices, a natural $\lambda \leq n$.}
		\ENSURE{A defense strategy that satisfies the statement of Theorem \ref{thm:appr_def_rat}.}
		
		\medskip
		
		\STATE{Find a spanning tree $T$ of $G$.}\label{step:span}
		\STATE{Construct a collection $\mathcal{L}$ of $\lambda$-subgraphs of $T$ as described in the proof of Lemma~\ref{lem:coverage}.}\label{step:collection}
		
		\STATE{Assign probability $q_i = \frac{1}{|\mathcal{L}|}$ to every $\lambda$-subgraph in $\mathcal{L}$, $i=1,2,\ldots, |\mathcal{L}|$.} \label{step:prob}
		
		\medskip
		
		\RETURN{The above uniform defense strategy over the collection $\mathcal{L}$.}
	\end{algorithmic}
\end{algorithm}

\begin{theorem}\label{thm:appr_def_rat}
	Given any graph $G=(V,E)$, Algorithm~\ref{alg:approximation_of_p*} computes in time $O(|E|)$ a defense strategy such that, for any combination of attack strategies, the resulting strategy profile $S$ yields defense ratio $\mathrm{DR}(G,S) \leq \left( 2+\frac{\lambda - 3}{n} \right) \cdot \mathrm{DR}(G,S^*)$.
\end{theorem}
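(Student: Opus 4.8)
The plan is to assemble three ingredients already established: the covering bound of Lemma~\ref{lem:num_of_subgraphs}, the universal upper bound $p^*(G)\le \lambda/n$ from Theorem~\ref{thm:def-opt_ratio}, and the identity $\mathrm{DR}(G,S^*)=1/p^*(G)$ from Lemma~\ref{lem:char_DR}. The heavy lifting has in fact been done in those lemmas, so the theorem will follow by a short chain of inequalities once the running time and the ``for any attack profile'' clause are dealt with. Throughout I assume $G$ is connected, so that a spanning tree and a covering collection of $\lambda$-subgraphs exist.

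First I would handle the running time. Step~\ref{step:span} builds a spanning tree $T$ by a single DFS/BFS in $O(|V|+|E|)=O(|E|)$ time. Step~\ref{step:collection} runs the construction of Lemma~\ref{lem:coverage}, which is a DFS on $T$ visiting each vertex $v$ at most $\mathrm{degree}(v)$ times, plus one final pass to complete the last $\lambda$-subgraph; hence it costs $O(\sum_v \mathrm{degree}(v))=O(n)$. Step~\ref{step:prob} assigns $|\mathcal{L}|=O(n/\lambda)$ probabilities. Since $G$ is connected we have $|E|\ge n-1$, so every step after the first is $O(n)\subseteq O(|E|)$ and the total is dominated by the spanning-tree computation, giving $O(|E|)$.

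Next I would establish the key inequality $\mathrm{DR}(G,S)\le 1/p'$ for \emph{every} attack profile, where $p'=\min_{i\in[n]}p_i$ under the uniform defense strategy $q$ returned by the algorithm. Writing the defense value as $\sum_{i\in[n]} p_i\,\mathbb{E}[N_i]$ and using $p_i\ge p'$ for all $i$ yields defense value $\ge p'\sum_{i}\mathbb{E}[N_i]=p'k$, so $\mathrm{DR}(G,S)=k/(\text{defense value})\le 1/p'$ regardless of how the attackers distribute their probability. This is the step that makes the bound hold uniformly over all attack strategies rather than only in equilibrium, and I expect it to be the conceptual crux, even though the computation itself is one line.

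Finally I would chain the estimates. Because $\mathcal{L}$ covers $G$, every vertex lies in at least one $\lambda$-subgraph of the uniform support, so $p'\ge 1/|\mathcal{L}|$, and Lemma~\ref{lem:num_of_subgraphs} gives $1/p'\le|\mathcal{L}|\le (2n-3)/\lambda+1=(2n-3+\lambda)/\lambda$. Combining with $\mathrm{DR}(G,S^*)=1/p^*(G)$ and $p^*(G)\le\lambda/n$, the approximation factor becomes
\[
\frac{\mathrm{DR}(G,S)}{\mathrm{DR}(G,S^*)} = p^*(G)\cdot \mathrm{DR}(G,S)\le \frac{\lambda}{n}\cdot\frac{2n-3+\lambda}{\lambda}=\frac{2n-3+\lambda}{n}=2+\frac{\lambda-3}{n},
\]
which is exactly the claimed bound, so $\mathrm{DR}(G,S)\le\left(2+\frac{\lambda-3}{n}\right)\mathrm{DR}(G,S^*)$. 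The only loose ends to verify are the trivial cases $\lambda=1$ and $\lambda=n$ (where $p^*(G)$ is known exactly and the bound is easily checked) and the standing connectivity assumption on $G$.
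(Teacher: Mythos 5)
Your proposal is correct and follows essentially the same route as the paper: bound $|\mathcal{L}|$ via Lemma~\ref{lem:num_of_subgraphs}, get $p'\ge 1/|\mathcal{L}|$ from the uniform strategy over a covering collection, lower-bound the defense value by $p'k$ against any attack profile, and close the chain with $p^*(G)\le\lambda/n$ and $\mathrm{DR}(G,S^*)=1/p^*(G)$. Your explicit computation of the defense value as $\sum_i p_i\,\mathbb{E}[N_i]\ge p'k$ is a slightly cleaner justification of the ``any combination of attack strategies'' clause than the paper's informal worst-case argument, but the substance is identical.
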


\begin{proof}
	As argued in Lemma \ref{lem:num_of_subgraphs}, there is a collection $\mathcal{L}$ of $\lambda$-subgraphs with $|\mathcal{L}| \leq \frac{2n}{\lambda} + 1 - \frac{3}{\lambda}$ which covers $G$.
	Therefore, given the uniform defense strategy returned by Algorithm~\ref{alg:approximation_of_p*} (which determines the vertex probability $p_i$ for each vertex $i$) achieves a minimum vertex probability $p' := \min_{i \in [n]} p_i$ for which it holds that:
	\begin{align*}
	p' = \frac{1}{|\mathcal{L}|} \geq \frac{1}{\frac{2n}{\lambda}+1-\frac{3}{\lambda}} = \frac{\frac{\lambda}{n}}{2+\frac{\lambda - 3}{n}}  \geq \frac{1}{2+\frac{\lambda - 3}{n}} \cdot p^*(G) ,
	\end{align*}
	where the first equality is due to the fact that any leaf $v \in V$ of the spanning tree $T$ of $G$ through which $\mathcal{L}$ was created has $\coverage_{\mathcal{L}}(v) = 1$, and therefore there is such a vertex $v$ in $G$ that is covered by exactly one $\lambda$-subgraph; and the last inequality is due to the fact that $p^*(G) \leq \lambda/n$ for any graph $G$, where $p^*(G)$ is the MaxMin probability of $G$.
	
	The above inequality implies that if the defender chooses the prescribed strategy the minimum defense ratio cannot be too bad. That is because in the worst case for the defender, each and every attacker will choose a vertex $v'$ on which the aforementioned strategy of the defender results to vertex probability $p'$ (so that the attacker is caught with minimum probability). As a result, the defender will have the minimum possible expected payoff which is $p' \cdot k$. Thus, for the constructed defend strategy and any combination of attack strategies, the resulting strategy profile $S$ yields defense ratio:
	\begin{align*}
	\DR{G}{S} \leq \frac{k}{p' \cdot k} \leq \left( 2+\frac{\lambda - 3}{n} \right) \cdot \frac{1}{p^*(G)} = \left( 2+\frac{\lambda - 3}{n} \right) \cdot \DR{G}{S^*},
	\end{align*} 
	where the last equality is due to Lemma \ref{lem:char_DR}.
	
	With respect to the running time, notice that Step~\ref{step:span} of Algorithm~\ref{alg:approximation_of_p*} can be executed in time $O(|V|+|E(G)|) = O(|E(G)|)$. Step~\ref{step:collection} can be executed in time $O(|V|+|E(T)|) = O(|V|)$. Finally, Step \ref{step:prob} can be executed in constant time. Therefore, the total running time of Algorithm \ref{alg:approximation_of_p*} is $O(|E(G)|)$.
\end{proof}

\begin{corollary}
	For any graph $G$ there is a polynomial (in both $n$ and $\lambda$) time approximation algorithm (Algorithm \ref{alg:approximation_of_p*}) with approximation factor $1/\left( 2+\frac{\lambda - 3}{n} \right)$ for the computation of $p^*(G)$. 
\end{corollary}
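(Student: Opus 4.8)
The plan is to read the corollary as a reinterpretation of Theorem~\ref{thm:appr_def_rat}: instead of bounding the defense ratio, we read off directly the quality of the minimum vertex probability produced by Algorithm~\ref{alg:approximation_of_p*}. First I would recall that, by Lemma~\ref{lem:char_DR}, the optimal (equilibrium) defense ratio of $G$ equals $1/p^*(G)$, so approximating the best defense ratio within a multiplicative factor is equivalent to approximating $p^*(G)$ within the reciprocal factor. This is what turns the statement about $\DR{G}{S}$ in Theorem~\ref{thm:appr_def_rat} into a statement about $p^*(G)$.

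Next I would invoke the key inequality established inside the proof of Theorem~\ref{thm:appr_def_rat}: the uniform defense strategy over the collection $\mathcal{L}$ returned by the algorithm yields a minimum vertex probability $p' := \min_{i\in[n]} p_i = 1/|\mathcal{L}|$ satisfying $p' \geq \frac{1}{2+\frac{\lambda-3}{n}}\cdot p^*(G)$. Since $p'$ is realized by one particular defense strategy while $p^*(G) = \max_{q\in\Delta_\theta}\min_{i\in[n]} p_i$ is the maximum of this quantity over all strategies, we also have $p' \leq p^*(G)$. Combining, $\frac{1}{2+\frac{\lambda-3}{n}}\cdot p^*(G) \le p' \le p^*(G)$, so the value $p'$ computed by the algorithm is a one-sided approximation of $p^*(G)$ within the claimed factor $1/\left(2+\frac{\lambda-3}{n}\right)$.

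Finally I would verify the running-time claim. Theorem~\ref{thm:appr_def_rat} already shows the algorithm runs in time $O(|E(G)|) = O(n^2)$, which is polynomial in $n$, and is also polynomial in $\lambda$ since $\lambda\le n$; moreover the returned object---the collection $\mathcal{L}$ of at most $\frac{2n-3}{\lambda}+1$ $\lambda$-subgraphs together with the uniform distribution over them---has total size $O(n)$, so the whole procedure, including writing down its output, is polynomial in both $n$ and $\lambda$. It is worth noting that this is precisely where the approximation algorithm improves on the exact algorithm of Theorem~\ref{thm:LP_for_p^*}, whose running time was polynomial in $\binom{n}{\lambda}$ and hence exponential for general $\lambda$.

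I do not anticipate any real obstacle, since the corollary is essentially a translation of the already-proven theorem via Lemma~\ref{lem:char_DR}; the only point deserving explicit care is recording the inequality $p'\le p^*(G)$, so that the single multiplicative factor genuinely certifies an approximation of $p^*(G)$ rather than merely of the defense ratio.
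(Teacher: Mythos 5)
Your proposal is correct and follows exactly the route the paper intends: the corollary is stated without a separate proof precisely because it is the immediate restatement of Theorem~\ref{thm:appr_def_rat} via Lemma~\ref{lem:char_DR}, using the inequality $p' = 1/|\mathcal{L}| \geq p^*(G)/\left(2+\frac{\lambda-3}{n}\right)$ already derived in that theorem's proof together with the trivial bound $p' \leq p^*(G)$ and the $O(|E|)$ running time. Your explicit remark that $p' \leq p^*(G)$ must be recorded to certify a one-sided approximation is a sensible (if minor) addition that the paper leaves implicit.
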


The merit of finding a probability $p'$ that approximates (from below) $p^*(G)$ for a given graph $G$ through an algorithm such as Algorithm \ref{alg:approximation_of_p*} is in guaranteeing to the defender that, no matter what the attackers play, she always ``catches'' at least a portion $p'$ of them in expectation, where the best portion is $p^*(G)$ in an equilibrium. Algorithm \ref{alg:approximation_of_p*} guarantees that the defender catches at least $1/\left( 2+\frac{\lambda - 3}{n} \right)$ of the attackers in expectation.

\section{Bounds on the Price of Defense}

\begin{theorem}\label{thm:low_b-PoD}
	The \PoD($\lambda$) is lower bounded by $\floor{\frac{2(n-1)}{\lambda}}$ and $\floor{\frac{2(n-1)}{\lambda + 1}}$ for $\lambda$ even and odd respectively, when $\lambda \in \{2, 3, \dots, n-1\}$.
\end{theorem}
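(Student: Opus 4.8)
The plan is to lower bound $\PoD(\lambda)=\max_G \DR{G}{S^*}$ by exhibiting a single tree whose equilibrium defense ratio is large. Since $\DR{G}{S^*}=1/p^*(G)$ by Lemma~\ref{lem:char_DR}, it suffices to construct a tree $T$ on $n$ vertices whose MaxMin probability $p^*(T)$ is small. I would use a \emph{spider} (a star of paths): a center $c$ together with $m$ legs, each leg being a path of $\ell := \ceil{\lambda/2}$ vertices, where $m := \floor{(n-1)/\ell}$; if $1+m\ell<n$ I attach one extra shorter leg of the leftover $n-1-m\ell<\ell$ vertices so that $T$ has exactly $n$ vertices. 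Note that $\ell=\ceil{\lambda/2}<\lambda\le n-1$ for every $\lambda\ge 2$, so $\lambda$-subgraphs exist, $m\ge 1$, and the construction is non-degenerate. The intuition is that the leg length is tuned so that a single connected $\lambda$-subgraph cannot reach the far ends (tips) of two different legs, forcing the defender to ``pay'' roughly half of her budget of $\lambda$ covered vertices on the path back to the center.

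The heart of the argument is one structural claim: \emph{every $\lambda$-subgraph of $T$ contains at most one tip among the $m$ full legs}. Indeed, in a tree any connected subgraph containing two vertices must contain the unique path between them; the path between the tips of two distinct full legs passes through $c$ and has exactly $2\ell+1$ vertices, and by the choice $\ell=\ceil{\lambda/2}$ we get $2\ell+1\ge \lambda+1>\lambda$ for both parities of $\lambda$. Hence no connected set of $\lambda$ vertices contains two such tips. Let $U$ be the set of these $m$ tips. Then for \emph{any} defense strategy $q$, writing $p_v$ for the induced vertex probabilities, I would compute $\sum_{v\in U}p_v=\sum_{j\in[\theta]}q_j\,|j\cap U|\le \sum_j q_j=1$, using $|j\cap U|\le 1$ for every $\lambda$-subgraph $j$. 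Averaging gives $\min_{v\in U}p_v\le 1/m$, and since $U\subseteq[n]$ this forces $\min_{i\in[n]}p_i\le 1/m$ for every $q$; taking the maximum over $q$ yields $p^*(T)\le 1/m$. Therefore $\DR{T}{S^*}=1/p^*(T)\ge m$, whence $\PoD(\lambda)\ge m$.

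It then remains to identify $m$ with the claimed floors. For even $\lambda$ we have $\ell=\lambda/2$, so $m=\floor{(n-1)/(\lambda/2)}=\floor{2(n-1)/\lambda}$; for odd $\lambda$ we have $\ell=(\lambda+1)/2$, so $m=\floor{2(n-1)/(\lambda+1)}$, matching the two cases of the statement (and explaining the parity split: an odd $\lambda$ cannot be halved exactly, so the legs are one vertex longer and fewer of them fit). The step that needs the most care — and the main obstacle — is the bookkeeping for general $n$, where the leftover short leg must neither add a spurious element to $U$ nor break the one-tip claim. This is handled by letting $U$ contain \emph{only} the tips of the $m$ full legs, so the path-length computation $2\ell+1>\lambda$ applies verbatim to any two of them regardless of the partial leg (a $\lambda$-subgraph may well contain the short leg's tip together with one full-leg tip, but it still meets $U$ in at most one vertex, which is all the averaging step uses). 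The secondary facts to verify are routine: that $\ell<\lambda$ so no $\lambda$-subgraph fits inside a single leg, and that the regime $\lambda$ near $n$ — where the floors shrink the bound — is exactly the restriction $\lambda\in\{2,\dots,n-1\}$ under which the statement is made.
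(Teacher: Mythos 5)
Your proposal is correct and is essentially the paper's own argument: the same spider construction (center plus legs of length $\lceil\lambda/2\rceil$, with one leftover short leg), the same key observation that no connected $\lambda$-subgraph can contain the tips of two full legs, and the same summation of tip probabilities to force $p^*\le 1/m$ with $m=\lfloor 2(n-1)/\lambda\rfloor$ or $\lfloor 2(n-1)/(\lambda+1)\rfloor$ by parity. The only (harmless) difference is that you stop at the upper bound on $p^*$, which suffices for a lower bound on the Price of Defense, whereas the paper also verifies that $1/m$ is attained by the uniform strategy over the legs.
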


\begin{proof}
	We will prove the statement by showing that for any given $n$ and $\lambda \in \{ 2, 3,\dots, n-1 \}$, there exists a graph $G = (V,E)$ on $n$ vertices that requires (at least) some number roughly $b = \floor{\frac{2(n-1)}{\lambda + 1}}$ of $\lambda$-subgraphs to be covered and additionally this graph's structure achieves $p^*(G)$ for the uniform defense strategy, i.e. each $\lambda$-subgraph is assigned equal probability $1/b$.
	
	The graph we construct is the following. First, consider a line graph with $\sigma$ vertices, where $\sigma = \ceil{\frac{\lambda}{2}}$. Keep a {\em central vertex} to use later, and using only $n-1$ vertices, create as many {\em complete lines} with $\sigma$ vertices as possible, i.e. $b = \floor{\frac{n-1}{\sigma}}$. Create another {\em incomplete line} (if needed) with strictly less than $\sigma$ vertices using the remaining ones $n-1 - b \cdot \sigma$. Now draw an edge from the central vertex to a single leaf of each of the constructed lines. For examples of the construction of $G$ in each of the below three cases, see Figures \ref{fig:case1}, \ref{fig:case2a}, and \ref{fig:case2b}.
	
	Consider now a defense strategy $q:=(q_1, q_2, \dots, q_{\theta}) \in \Delta_{\theta}$ and the vertex probabilities $p_1, p_2, \dots, p_n$ it induces on the vertices of $G$. 

	\textbf{Case 1: $\lambda$ is even.} In this case $\sigma = \lambda / 2$ and observe that the diameter of this graph $G$ is equal to $\lambda +1$, therefore no $\lambda$-subgraph that covers a leaf of a complete line can cover a leaf of another complete line. Also, any $\lambda$-subgraph that covers a leaf of a complete line can cover the whole incomplete line. Therefore, this graph can be covered by $b$ $\lambda$-subgraphs but no less. Assume that $q$ covers $G$, i.e. $p_i > 0, \forall i \in [n]$, and let us focus on the set $V_{com}$ of leaves of the complete lines of $G$, where $|V_{com}| = b$ as argued earlier, and denote $V_{com}$ by $[b]$. Consider the vertex probabilities $p_i$, $i \in [b]$, and note that $\sum_{i \in [b]} p_i \leq 1$ where strict inequality holds for the case where there exists some pure strategy $L_j \in supp(q)$ such that $L_j \cap V_{com} = \emptyset$. Then for $p' := \min_{i \in [b]} p_i$ it holds that $p' \leq 1/b$, otherwise $p_{i} > 1/b$, $\forall i \in [b]$ and therefore $\sum_{i \in [b]} p_i > 1$ which is a contradiction. Also, for $p_i = 1/b$, $\forall i \in [b]$, it is $p' = 1/b$, which yields $p^*(G) := \max_{q \in \Delta_{\theta}} p' = 1/b$. 

	\textbf{Case 2: $\lambda$ is odd.} In this case $\sigma = (\lambda + 1) / 2$ and the diameter of $G$ equals $\lambda + 2$, therefore no $\lambda$-subgraph that covers a leaf of a complete line can cover a leaf of another complete line.  
	\begin{itemize}
		\item \textbf{Subcase (a): $\sigma - \left( n-1 - b \cdot \sigma \right) \neq 1$.} Any $\lambda$-subgraph that covers a leaf of a complete line can cover the whole incomplete line. Therefore, this graph can be covered with $b$ $\lambda$-subgraphs but no less. Following the analysis of Case 1, it is $p^*(G) := \max_{q \in \Delta_{\theta}} p' = 1/b$.
	 	\item \textbf{Subcase (b): $\sigma - \left( n-1 - b \cdot \sigma \right) = 1 $.} No $\lambda$-subgraph that covers a leaf of a complete line can cover the leaf of the incomplete line. Therefore, this graph can be covered by $b+1$ $\lambda$-subgraphs but no less. Following similar analysis as that of Case 1, where instead of $V_{com}$ we have $V_{com} \cup \{ v_{inc} \}$ where $v_{inc}$ is the leaf of the incomplete line, and instead of $b$ we have $b+1$, we conclude that $p^*(G) := \max_{q \in \Delta_{\theta}} p' = 1/(b+1)$.
	\end{itemize}
	
	For Case 1, and Case 2(a), since each of the leaves of the $b$ complete lines have vertex probability $1/b$, the defense strategy $q^*$ with probability $q^*_i = 1/b$ assigned to the respective pure strategy $L_i , i \in [b]$ that contains vertex $i \in [b]$, yields $p^*(G)$. For Case 2(b), since each of the leaves of the $b$ complete lines and the leaf $v_{inc}$ of the incomplete line have vertex probability $1/b$, the defense strategy $q^*$ with probability $q^*_i = 1/(b+1)$ assigned to the respective pure strategy $L_i , i \in [b] \cup \{ v_{inc} \}$ that contains vertex $i \in [b] \cup \{ v_{inc} \}$, yields $p^*(G)$.
	
	By the above values of $p^*(G)$ and Lemma \ref{lem:char_DR} the proof of the theorem is complete.
	
	\begin{figure}[h!]
		\begin{center}
			\includegraphics[scale=0.41]{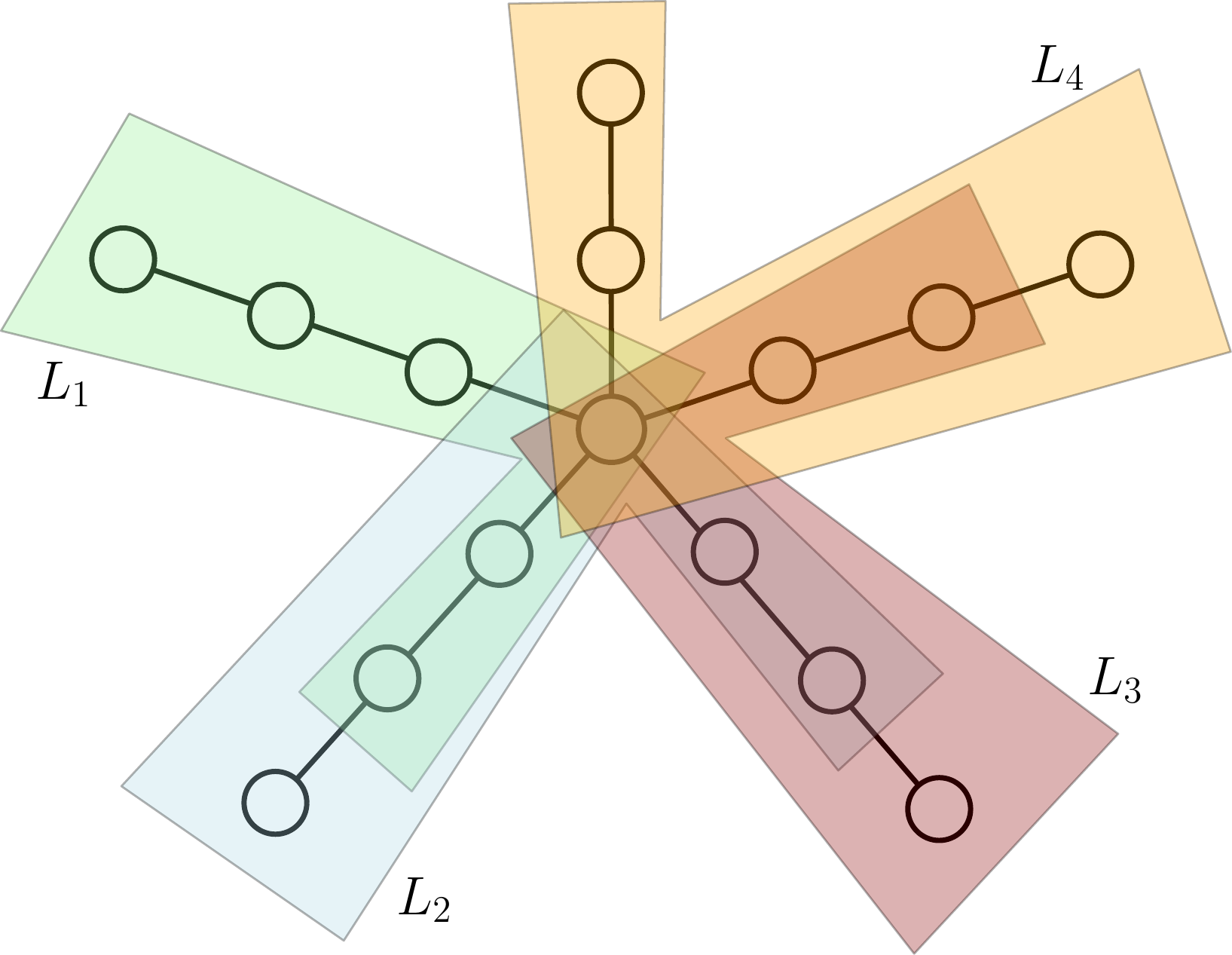}
		\end{center}
		\caption{An example of \textbf{Case 1} of Theorem \ref{thm:low_b-PoD}, where $n=15$ and $\lambda=6$. Here, graph $G$ has $\sigma = 3$ and $b = 4$. The $\lambda$-subgraphs $L_1, L_2, L_3, L_4$ that constitute the support of a best-defense strategy are shown with various colors.}\label{fig:case1}
	\end{figure}
	
	\phantom{ }
	
	\begin{figure}[h!]
		\centering
		\begin{minipage}{0.47\textwidth}
			\centering
			\includegraphics[width=0.9\linewidth]{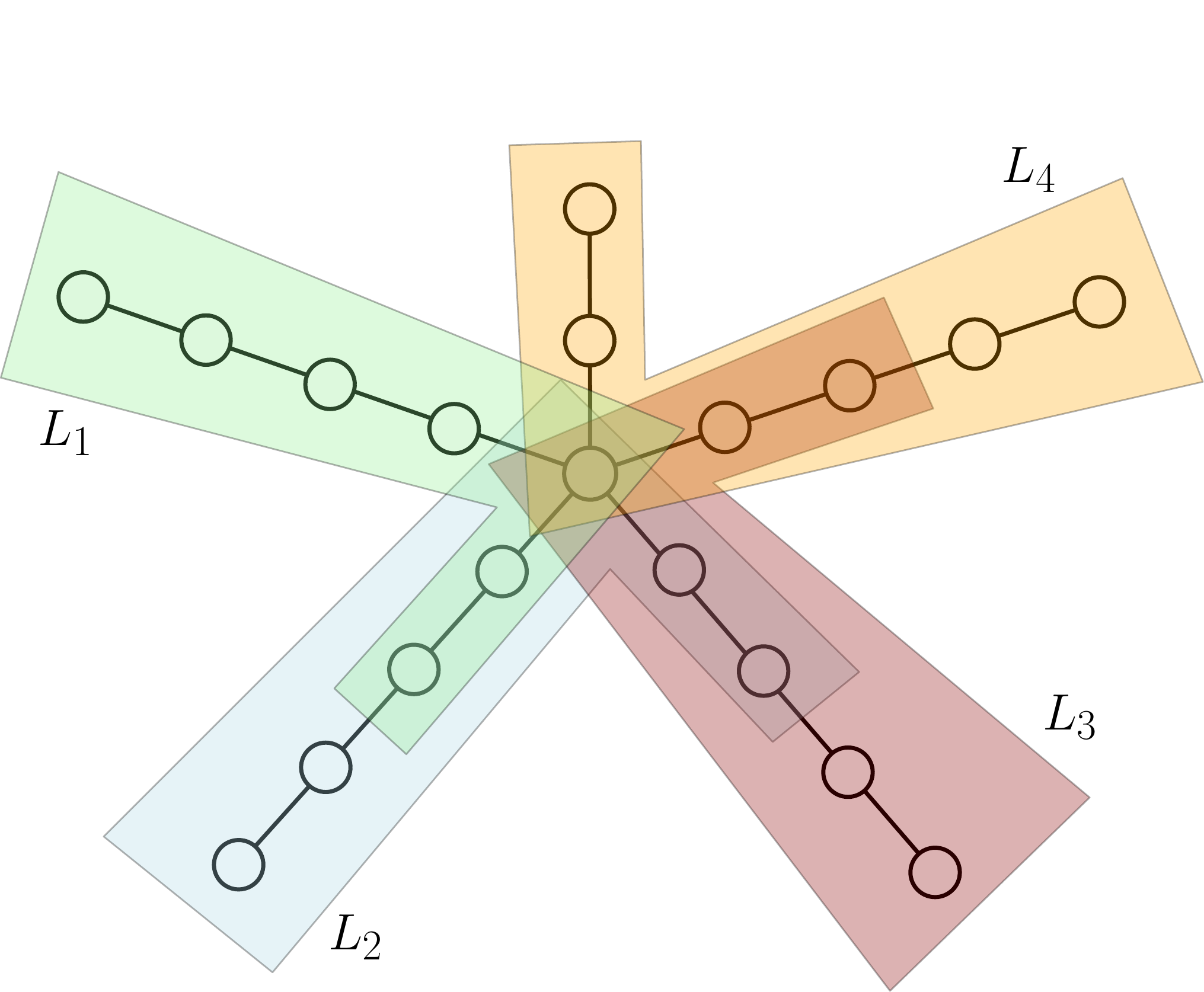}
			\caption{An example of \textbf{Case 2(a)} of Theorem \ref{thm:low_b-PoD}, where $n=19$ and $\lambda=7$. Here, graph $G$ has $\sigma = 4$ and $b = 4$. The $\lambda$-subgraphs $L_1, L_2, L_3, L_4$ that constitute the support of a best-defense strategy are shown with various colors.}\label{fig:case2a}
		\end{minipage}%
		\hfill
		\begin{minipage}{0.47\textwidth}
			\centering
			\includegraphics[width=0.9\linewidth]{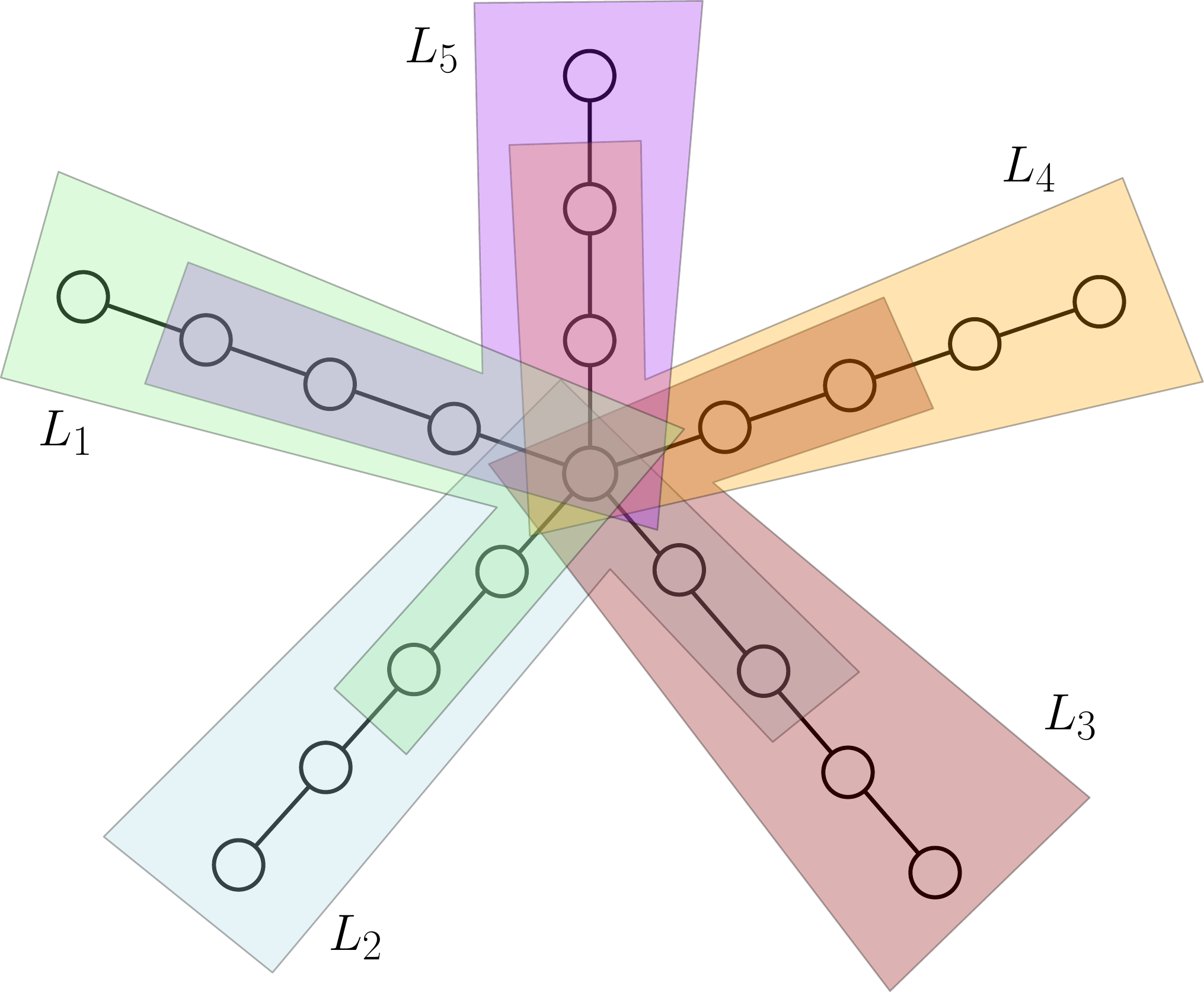}
			\caption{An example of \textbf{Case 2(b)} of Theorem \ref{thm:low_b-PoD}, where $n=20$ and $\lambda=7$. Here, graph $G$ has $\sigma = 4$ and $b = 4$. The $\lambda$-subgraphs $L_1, L_2, L_3, L_4, L_5$ that constitute the support of a best-defense strategy are shown with various colors.}\label{fig:case2b}
		\end{minipage}
	\end{figure}

	\phantom{ }
\end{proof}

\begin{corollary}
	For any given $n$ and $2 \leq \lambda \leq n-1 $, it holds that $\floor{\frac{2(n-1)}{\lambda + 1}} \leq $\PoD$(\lambda) \leq  \frac{2(n-1) + \lambda - 1}{\lambda}$. Furthermore, for the trivial cases $\lambda \in \{ 1, n \}$ it is $\PoD(1)=n$ and $\PoD(n)=1$.
\end{corollary}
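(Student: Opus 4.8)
The plan is to read off both bounds from results already established and to handle the two extreme values of $\lambda$ by direct inspection. The unifying observation is that, by Lemma~\ref{lem:char_DR}, in equilibrium $\DR{G}{S^*}=1/p^*(G)$, so $\PoD(\lambda)=\max_G \DR{G}{S^*}=\max_G 1/p^*(G)=1/\min_G p^*(G)$. Thus a lower bound on the Price of Defense amounts to exhibiting a single graph with small $p^*$, while an upper bound amounts to showing that \emph{every} graph has $p^*$ bounded below; both ingredients are already present in the preceding sections.

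For the lower bound I would simply invoke Theorem~\ref{thm:low_b-PoD}, which gives $\PoD(\lambda)\geq \floor{\frac{2(n-1)}{\lambda}}$ for even $\lambda$ and $\PoD(\lambda)\geq \floor{\frac{2(n-1)}{\lambda+1}}$ for odd $\lambda$. The odd case already matches the corollary verbatim. For the even case I would note that since $\lambda<\lambda+1$ and the floor function is monotone, $\floor{\frac{2(n-1)}{\lambda}}\geq \floor{\frac{2(n-1)}{\lambda+1}}$, so the stronger even-parity bound dominates the uniform one. Hence $\PoD(\lambda)\geq \floor{\frac{2(n-1)}{\lambda+1}}$ holds for every $\lambda\in\{2,\dots,n-1\}$ regardless of parity; this reconciliation of the two parity cases is the only genuinely nonautomatic step, and it is immediate.

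For the upper bound I would apply Lemma~\ref{lem:num_of_subgraphs}: every graph $G$ on $n$ vertices admits a covering collection $\mathcal{L}$ of at most $\frac{2n-3}{\lambda}+1=\frac{2(n-1)+\lambda-1}{\lambda}$ $\lambda$-subgraphs. Taking the uniform defense strategy over $\mathcal{L}$, each vertex $i$ has vertex probability $p_i=\coverage_{\mathcal{L}}(i)/|\mathcal{L}|\geq 1/|\mathcal{L}|$ since every vertex is covered at least once; therefore $p^*(G)\geq \min_i p_i\geq 1/|\mathcal{L}|\geq \frac{\lambda}{2(n-1)+\lambda-1}$. By Lemma~\ref{lem:char_DR} this yields $\DR{G}{S^*}=1/p^*(G)\leq |\mathcal{L}|\leq \frac{2(n-1)+\lambda-1}{\lambda}$, and since the bound is graph-independent, taking the maximum over all $G$ gives $\PoD(\lambda)\leq \frac{2(n-1)+\lambda-1}{\lambda}$.

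Finally, for the trivial endpoints I would recall the observations from the proof of Theorem~\ref{thm:LP_for_p^*}. When $\lambda=1$ the action set is $\{\{i\}:i\in V\}$, so the uniform strategy over singletons gives $p^*(G)=1/n$ for every graph and thus $\PoD(1)=n$; when $\lambda=n$ the only $\lambda$-subgraph is the whole (connected) graph, forcing $p^*(G)=1$ and hence $\PoD(n)=1$. None of these steps presents a real obstacle: the statement is genuinely a corollary, and the work is confined to bookkeeping plus the one monotonicity argument that merges the two parity cases of Theorem~\ref{thm:low_b-PoD}.
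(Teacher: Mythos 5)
Your proposal is correct and follows essentially the same route as the paper: lower bound from Theorem~\ref{thm:low_b-PoD}, upper bound from the covering collection of at most $\frac{2n-3}{\lambda}+1$ $\lambda$-subgraphs (which is exactly what underlies Theorem~\ref{thm:appr_def_rat}, the result the paper cites), and the trivial endpoints by inspection of the action set. Your explicit reconciliation of the two parity cases via monotonicity of the floor is a small clarification the paper leaves implicit, but it does not change the argument.
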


\begin{proof}
	For the lower bound for $2 \leq \lambda \leq n-1$, Theorem \ref{thm:low_b-PoD} shows that for given $n$ and $\lambda$ there exists a graph $G$ with particular (very small) $p^*(G)$, and according to Lemma \ref{lem:char_DR} this yields the corresponding (great) best defense ratio. The upper bound is due to Theorem \ref{thm:appr_def_rat}. For the cases $\lambda=1$ and $\lambda = n$, observe that the defender's action set is $D = \{ \{ i \} | i \in V \}$ and $D = V$ respectively, therefore $p^*(G) = 1/n$ and $p^*(G) = 1$ respectively, and again from Lemma \ref{lem:char_DR} we get the values in the statement of the corollary.
\end{proof}



\bibliographystyle{splncs04}
\bibliography{references}

\end{document}